\documentclass[11pt]{article}
\usepackage[T1]{fontenc}
\usepackage{amsmath, amsthm}
\usepackage{amsfonts}
\usepackage{amssymb}
\usepackage{graphicx}
\usepackage{amsthm}
\usepackage{listings}
\usepackage{tikz}
\usepackage{booktabs}
\usetikzlibrary{trees}
\usepackage{float}
\usepackage{xcolor}

\definecolor{codegreen}{rgb}{0,0.6,0}
\definecolor{codegray}{rgb}{0.5,0.5,0.5}
\definecolor{codepurple}{rgb}{0.58,0,0.82}
\definecolor{backcolour}{rgb}{0.9,0.9,0.9}

\lstdefinestyle{mystyle}{
	backgroundcolor=\color{backcolour},   
	commentstyle=\color{codegreen},
	keywordstyle=\color{magenta},
	numberstyle=\tiny\color{codegray},
	stringstyle=\color{codepurple},
	basicstyle=\ttfamily\footnotesize,
	breakatwhitespace=false,         
	breaklines=true,                 
	captionpos=b,                    
	keepspaces=true,                 
	numbersep=5pt,                  
	showspaces=false,                
	showstringspaces=false,
	showtabs=false,                  
	tabsize=2, 
	alsoletter ={_, .},
	deletekeywords={_, .},
	columns = fullflexible,
	texcl=true,
	otherkeywords={!,!=,~,*,\&,\%/\%,\%*\%,\%\%,<-,<<-},
}

\lstset{style=mystyle}

\usepackage{pgfplotstable}

\usepackage{breakcites}
\RequirePackage[colorlinks,citecolor=blue,urlcolor=blue]{hyperref}
\RequirePackage{graphicx}
\usepackage[ruled,vlined]{algorithm2e}

\newlength{\mytextwidth}
\setlength{\mytextwidth}{6.5in}

\usepackage{multicol}
\usepackage{tikz, pgfplots}
\usepackage{authblk}
\pgfplotsset{compat=1.17}
\usetikzlibrary[arrows.meta,bending, calc, positioning, patterns]

\setlength{\parskip}{1em}

\theoremstyle{plain}

\newtheorem{theorem}{Theorem}[section]
\newtheorem{lemma}[theorem]{Lemma}
\newtheorem{corollary}[theorem]{Corollary}

\newtheorem{remark}[theorem]{Remark}
\newtheorem{example}[theorem]{Example}
\newtheorem{definition}[theorem]{Definition}

\newcommand{\diff}{\, \mathrm{d}}
\newcommand{\e}{\mathrm{e}}

\usepackage[margin=1in]{geometry}

\usepackage{todonotes}

\begin{document}

\title{Efficient evaluation of risk allocations}

\author[1]{Christopher Blier-Wong\thanks{Corresponding author, \href{mailto:christopher.blierwong@utoronto.ca}{christopher.blierwong@utoronto.ca}}}
\author[2]{Hélène Cossette}
\author[2]{Etienne Marceau}

\affil[1]{Department of Statistical Sciences, University of Toronto, Canada}
\affil[2]{École d'actuariat, Université Laval, Canada}

\date{\today}

\maketitle

\begin{abstract}
Expectations of marginals conditional on the total risk of a portfolio are crucial in risk-sharing and allocation. However, computing these conditional expectations may be challenging, especially in critical cases where the marginal risks have compound distributions or when the risks are dependent. We introduce a generating function method to compute these conditional expectations. We provide efficient algorithms to compute the conditional expectations of marginals given the total risk for a portfolio of risks with lattice-type support. We show that the ordinary generating function of unconditional expected allocations is a function of the multivariate probability generating function of the portfolio. The generating function method allows us to develop recursive and transform-based techniques to compute the unconditional expected allocations. We illustrate our method to large-scale risk-sharing and risk allocation problems, including cases where the marginal risks have compound distributions, where the portfolio is composed of dependent risks, and where the risks have heavy tails, leading in some cases to computational gains of several orders of magnitude. Our approach is useful for risk-sharing in peer-to-peer insurance and risk allocation based on Euler's rule.
\end{abstract}

\textbf{Keywords: Risk allocation, generating functions, conditional mean risk-sharing, fast Fourier transform, Euler risk allocation} 

\section{Introduction}

Risk allocations are essential in actuarial science and quantitative risk management. Roughly speaking, risk allocation refers to redistributing a total risk to its granular risks. Risk allocation is used in various contexts, such as insurance pricing, reinsurance, and regulatory capital requirements. For instance, one may allocate the total risk to the policyholders for peer-to-peer insurance pricing. For an insurance company with many lines of business, one may allocate the total risk to each line of business to determine the required capital for each line. 

Numerical evaluation of aggregate and compound distributions is a challenging problem in non-life actuarial science and quantitative risk management. Typical approaches for the evaluation of aggregate and compound distributions are using the direct convolution approach, the Panjer recursion, or the transform-based techniques like the fast Fourier transform (FFT), see \cite{wang1996premium,embrechts2009panjer,mcneil2015quantitative} for details. The direct convolution approach is computationally expensive, while the Panjer recursion is limited to compound distributions. The transform-based techniques are more efficient than the previous methods and can be used for a wide range of distributions. Thanks to efficient algorithms like the FFT, one can compute the aggregate distribution of a large portfolio of risks in a reasonable amount of time, essential for large-scale risk aggregation problems in actuarial science and quantitative risk management. 

Numerical evaluation of risk allocations is a more challenging problem than computing the aggregate distribution since it requires computing (i) the conditional expectations of the marginals given the total risk and (ii) the probability mass function of the aggregate random variable. While there exists a growing literature on risk-sharing and risk allocation, along with the properties of risk-sharing rules, the literature on the numerical evaluation of risk allocations is scarce, and the methods are often limited to small pools or a few lines of business. Solving this problem for a large portfolio of risks is especially tedious, especially when the marginal risks have compound distributions or when the risks are dependent. Providing efficient algorithms to compute the conditional expectations of marginals given the total risk is essential for risk-sharing and risk allocation problems in practical situations, and we aim in this paper to address this limitation. 

In this paper, we introduce a transform-based method to compute the conditional expectations of the marginals given the total risk, that is, $E[X_i \vert S = s]$, where $s$ lies on a lattice-type support. We show that the ordinary generating function of unconditional expected allocations is a function of the multivariate probability generating function of the portfolio. The generating function method allows us to develop transform-based techniques to compute unconditional expected allocations. 
We provide a recursive method to compute unconditional expected allocations when marginals are compound distributions. Our findings are helpful for risk-sharing and risk allocation problems, allowing computational gains that make analysis feasible for large portfolios of risks.

The main result of this paper is a representation of the ordinary generating function (OGF) for unconditional expected allocations in terms of the multivariate probability generating function (pgf) of the random vector $\boldsymbol{X}$. 
This representation enables new techniques to compute the unconditional expected allocation $E[X_i \times 1_{\{S = s\}}]$, for $i \in \{1, \dots, n\}$ and $s \in h\mathbb{N}$ efficiently. 
In some cases, we can invert the OGF of unconditional expected allocations analytically to obtain formulas expressed as partial sums depending on the pmf of $S$. In other cases, we can use an efficient algorithm, such as the FFT, to compute the unconditional expected allocations directly from the OGF; this algorithm leads to significant computational gains that allow us to compute the unconditional expected allocations for large portfolios of risks with an accuracy unachievable with existing methods.

Our approach provides a generating function method instead of a direct computation method. The method of generating functions is standard in discrete mathematics and number theory. One can use generating functions to compute a term in the Fibonacci sequence, find the sequence average, study recurrence relations or prove combinatorial identities; see \cite{wilf2006generatingfunctionology} for details. Generating functions often provide convenient and elegant methods to compute or extract terms from a sequence where closed-form formulas would be tedious. A problem encountered within this paper is related to finding the number of partitions (restricted partitions in some cases, see Section \ref{sec:bern} for an example with Bernoulli rvs) to determine if two agents can share a risk for a given outcome of the total risk. It isn't surprising to observe that generating functions can solve these problems within the context of discrete rvs: these methods are also used to compute the pmf of aggregate rvs or compound distributions; see, for example, \cite{embrechts1993applications}, \cite{grubel1999computation}, and \cite{embrechts2009panjer} for applications of generating functions for risk management. The latter authors use the expression transform approaches to refer to generating function methods, and we will follow their nomenclature in what follows. 

The paper is organized as follows. Section \ref{sec:background} provides background on risk allocations and the problem of computing conditional expectations of marginals given the total risk, along with notation and a review of existing approaches. Section \ref{sec:agf} introduces the ordinary generating function for unconditional expected allocations and reveals a relationship with the multivariate probability generating function. We also propose an efficient method to extract the unconditional expected allocations from the ordinary generating function. Section \ref{sec:ab0} presents expressions for the unconditional expected allocation in the case of (compound) Katz distributions. The results in Section \ref{sec:ab0} provide recursive formulas for unconditional expected allocations based on the pmf of the aggregate random variable, but we prove these results by using a transform-based approach and returning to the original space to compute conditional expectations. It is not always possible to analytically return to the original space; in this case, we explore applications of the FFT algorithm to compute the unconditional expected allocations in Section \ref{sec:fft}. In Section \ref{sec:dependent}, we illustrate our method to dependent risks, sometimes obtaining recursive relations, other times using the FFT algorithm. Section \ref{sec:discussion} discusses further generalizations of our results for continuous random variables.

\section{Numerical evaluation of risk allocations}\label{sec:background}

\subsection{Risk-sharing and risk allocation}

Pooling or aggregating risks is a core principle in insurance and risk management, whether in traditional insurance settings or peer-to-peer insurance models. In both cases, the aim is to reduce the variability of the total risk by pooling the risks of different policyholders or lines of business. This aggregation is beneficial as long as the risks are not perfectly correlated or comonotonic due to the diversification effect. Understanding the total aggregated risk distribution is crucial for computing essential risk measures like the Value-at-Risk (VaR) or the Tail-Value-at-Risk (TVaR, also known as expected shortfall or conditional tail expectation in the finance and banking literature). These measures help risk managers understand the portfolio's potential losses and set aside capital to cover these losses in compliance with regulatory requirements. 

In peer-to-peer insurance, the concept of risk aggregation is applied by having policyholders pool their risks. Each member shares in the total risk and pays a premium that reflects their contribution to the pooled risk. These premiums are determined post-event, based on the actual losses experienced by the group and according to a predetermined risk-sharing rule. The conditional mean risk-sharing rule, introduced by \cite{denuit2012convex}, is a popular choice in peer-to-peer insurance pricing. This rule states that the price for the $i$th participant is the expected contribution of risk $X_i$, given that the actual loss $S$ is $s$. The authors of \cite{denuit2021risksharing} list twelve desirable properties for risk-sharing rules and prove that the conditional mean risk-sharing rule satisfies eleven of them. In \cite{jiao2022axiomatic}, the authors provide an axiomatic approach to risk-sharing rules and show that the conditional mean risk-sharing rule satisfies the axioms of actuarial fairness, risk fairness, risk anonymity, and operational anonymity.

\subsection{Notation}

We now set the notation used throughout the paper. Let $\mathbb{N}$ be the set of non-negative integers $\{0, 1, 2, \dots\}$, while $\mathbb{N}_1$ be $\mathbb{N} \setminus \{0\}$. Consider a portfolio of $n$ risks $\boldsymbol{X} = (X_1, \dots, X_n)$ where each random variable (rv) has a distribution supported on a lattice-type set $h \mathbb{N} = \{hk | k \in \mathbb{N}\}$, for some fixed $h \in \mathbb{R}^+$. The multivariate cumulative distribution function (cdf) and probability mass function (pmf) are respectively $F_{\boldsymbol{X}}(\boldsymbol{x})$ and $f_{\boldsymbol{X}}(\boldsymbol{x})$, for $\boldsymbol{x} \in \{h\mathbb{N}\} \times \dots \times \{h\mathbb{N}\} = \{h\mathbb{N}\}^n$ and marginal cdfs and pmfs are noted $F_{X_i}(x) = \Pr(X_i \leq x)$ and $f_{X_i}(x) = \Pr(X_i = x)$, for $x \in h\mathbb{N}$ and $i \in \{1, \dots, n\}$. Throughout the paper, we assume that $E[X_i] < \infty$, for $i \in \{1,\ldots,n\}$. The rv representing the portfolio aggregate loss is $S = X_1 + \dots + X_n,$ with cdf $F_S$, pmf $f_S$ and expectation $E[S] = \sum_{i=1}^n E[X_i] < \infty$.

\subsection{Unconditional expected allocations}

In this paper, we propose methods based on ordinary generating functions to compute the unconditional expected allocation, which we define as follows. 
\begin{definition}[Unconditional expected allocation]
\label{def:ExpectedAllocation}
	Let $\boldsymbol{X}$ be a vector of rvs, each with distributions supported on a lattice-type set $h\mathbb{N}$, and $S = X_1 + \dots + X_n$. 
	The unconditional expected allocation of $X_i$ to a total outcome $S=s$ is defined as $E[X_i \times 1_{\{S = s\}}],$ for $i \in \{1, \dots, n\}$ and $s \in h \mathbb{N}$.
\end{definition}

While most loss distributions in actuarial science and quantitative risk management have continuous support, the assumption of lattice-type support is not too prohibitive. It suffices to discretize the continuous support into intervals of size $h$. The value of $h$ can be set to the smallest denomination of a currency system, e.g. $h = 0.01$ for dollars. In that case, the risk allocations are accurate to the nearest penny.

In the spirit of Definition \ref{def:ExpectedAllocation}, we introduce the unconditional expected cumulative allocation defined by
\begin{equation} \label{eq:expected-cumul-allocation}
    E\left[X_i \times 1_{\{S \leq s\}}\right] = \sum_{x \in \{0,h,\ldots,s\}} E\left[X_i \times 1_{\{S = x\}}\right],
\end{equation}
for $s \in h\mathbb{N}$ and $i \in \{1,\ldots,n\}$. 

Although rarely considered directly, unconditional expected allocations are essential in peer-to-peer insurance pricing and risk allocation based on Euler's rule. 

One is interested in computing a participant's contribution according to a risk-sharing rule in peer-to-peer insurance pricing schemes. The conditional mean risk-sharing rule, studied in, for instance, \cite{denuit2012convex}, is a popular choice, where the price for the $i$th participant is the expected contribution of risk $X_i$, for $i \in \{1, \dots, n\}$, given that the actual loss $S$ is $s$, that is,
\begin{equation}\label{eq:conditional-mean-risk-sharing}
	E[X_i \vert S = s] = \frac{E[X_i \times 1_{\{S = s\}}]}{\Pr(S = s)}, 
\end{equation}
assuming $\Pr(S=s) \neq 0$, for $s \in h\mathbb{N}$.
We note that $\sum_{i=1}^n E[X_i \vert S = s] = s$, meaning that, under the conditional mean sharing rule, the total contributions of all participants equal the total losses $s$, for $s \in h\mathbb{N}$. 

Unconditional expected allocations also appear in risk allocation based on Euler's rule. Regulatory capital requirements are risk measures based on the aggregate rv of an insurance company's portfolio. One risk measure of theoretical and practical interest is the TVaR. Risk allocation is an important research area in actuarial science, quantitative risk management and operations research, which aims to compute the contribution of each risk based on the total required (or available) capital. When using the TVaR as a regulatory capital requirement risk measure, one may compute the contributions of each risk to the capital based on the Euler risk-sharing paradigm; see \cite{mcneil2015quantitative} for details.

Following \cite{embrechts2013note}, define the generalized inverse of $S$ at level $\kappa$ by
$$F^{-1}_S(\kappa) = \inf_{x \in \mathbb{R}} \left\{F_{S}(x) \geq \kappa\right\},$$
for $0 < \kappa < 1$. 
We will consider the Range-Value-at-Risk as a risk measure for capital requirements, which can be seen as a generalization of the VaR and the TVaR. 
\begin{definition}
	The Range-Value-at-Risk is defined as 
	$$\mathrm{RVaR}_{\alpha_1, \alpha_2}(X) = \begin{cases}
		\frac{1}{\alpha_2 - \alpha_1} \int_{\alpha_1}^{\alpha_2} \mathrm{VaR}_{u}(X) \diff u, & \alpha_1 < \alpha_2\\
		\mathrm{VaR}_{\alpha_1}(X), & \alpha_1 = \alpha_2
	\end{cases},$$
	for $0 \leq \alpha_1 \leq \alpha_2 \leq 1$. 
\end{definition}
Clearly, the Range-Value-at-Risk becomes the Value-at-Risk $\mathrm{VaR}_\kappa(X)$ if $\alpha_1 = \alpha_2 = \kappa$. Further, if $\alpha_1 = \kappa$ and $\alpha_2 = 1$, then the corresponding RVaR is also $\mathrm{TVaR}_\kappa(X)$, also called the expected shortfall. We have
\begin{align*}
    \mathrm{RVaR}_{\alpha_1, \alpha_2}(S) &= \frac{1}{\alpha_2 - \alpha_1} \left\{F_S^{-1}(\alpha_1)[F_S(F_S^{-1}(\alpha_1)) - \alpha_1] + E\left[S \times 1_{\{F_S^{-1}(\alpha_1) <S \leq F_S^{-1}(\alpha_2)\}} \right]\right.\\
    &\qquad \qquad \qquad + \left. \vphantom{E\left[S \times 1_{\{F_S^{-1}(\alpha_1) <S \leq F_S^{-1}(\alpha_2)\}} \right]} F_S^{-1}(\alpha_2)[\alpha_2 - F_S(F_S^{-1}(\alpha_2))]\right\}.
\end{align*}

Applying Euler's rule for risk allocation \cite{tasche1999risk}, the contribution of $X_i$ to the RVaR of the aggregate random variable is $\mathrm{RVaR}_{\alpha_1, \alpha_2}(X_i; S) = E[X_i | S = \mathrm{VaR}_{\alpha_1}(S)]$ for $\alpha_1 = \alpha_2$ and
\begin{align}
	\mathrm{RVaR}_{\alpha_1, \alpha_2}(X_i; S) & = \frac{1}{\alpha_2 - \alpha_1} \left(E\left[X_i \times 1_{\left\{S = F_{S}^{-1}(\alpha_1)\right\}}\right] \frac{F_S\left(F_{S}^{-1}(\alpha_1)\right) - \alpha_1}{\Pr\left(S = F_{S}^{-1}(\alpha_1)\right)} \right. \nonumber\\
	& \qquad\qquad\qquad +
	E\left[X_i \times 1_{\left\{F_{S}^{-1}(\alpha_1) < S \leq F_{S}^{-1}(\alpha_2)\right\}}\right]\label{eq:rvar-decomp}\\
	& \qquad\qquad\qquad +\left.
	E\left[X_i \times 1_{\left\{S = F_{S}^{-1}(\alpha_2)\right\}}\right] \frac{\alpha_2 - F_S\left(F_{S}^{-1}(\alpha_2)\right)}{\Pr\left(S = F_{S}^{-1}(\alpha_2)\right)}
	\right),\nonumber
\end{align}
for $\alpha_1 < \alpha_2$. Notice that two of the three expected values in \eqref{eq:rvar-decomp} are unconditional expected allocations, and we can compute the third one using unconditional expected allocations. 
 
The Euler-based RVaR decomposition is a top-down risk allocation method of risk allocation. By the additive property of the expected value, the full allocation property \cite{mcneil2015quantitative} holds:
\begin{equation}\label{eq:full-allocation}
	\mathrm{RVaR}_{\alpha_1, \alpha_2}(S) = \sum_{i = 1}^{n}\mathrm{RVaR}_{\alpha_1, \alpha_2}(X_i; S)
\end{equation}
for any pair $(\alpha_1, \alpha_2)$ such that $0 \leq \alpha_1 \leq \alpha_2 \leq 1.$

The relations in \eqref{eq:conditional-mean-risk-sharing} and \eqref{eq:rvar-decomp} require the computation of the unconditional expected allocation $E\left[X_i \times 1_{\{S = s\}}\right]$ for $s \in h\mathbb{N}$. One, therefore, seeks an efficient method to compute these values.

\subsection{Existing approaches to compute unconditional expected allocations}

One finds two common approaches to computing unconditional expected allocations in the actuarial science and quantitative risk management literature. The first approach is a direct method for computing unconditional expected allocations through summation or integration. Letting $S_{-i} = \sum_{j = 1, j \neq i}^n X_j$, we have
\begin{equation*}
    E\left[X_1 \times 1_{\{S = s\}}\right] = \sum_{x \in \{0, h, 2h, \dots, s\}} x f_{X_1, S_{-1}}(x, s-x),
    \quad s\in h\mathbb{N}.
\end{equation*}
The direct summation method is used in \cite{cossette2018dependent} for discrete rvs when the dependence structure is an Archimedean copula. In Section 5 of \cite{barges2009tvarbased}, the authors use this approach in a continuous setting to compute TVaR-based allocations for a mixture of exponential distributions linked through an FGM copula. The second approach is based on size-biased transforms, used notably in \cite{furman2005risk, furman2008economic}; see also \cite{arratia2019size} for a review of the size-biased transform and its applications. Under that approach,
\begin{equation}\label{eq:allocation-size-biased}
	E\left[X_1 \times 1_{\{S = s\}}\right] = E[X_1] \Pr(\widetilde{X}_1 + S_{-1} = s),
	\quad s\in h\mathbb{N},
\end{equation}
where $\widetilde{X}_1$ is the size-biased transform of $X_1$ with pmf
\begin{equation}\label{eq:pmf-size-biased}
	f_{\widetilde{X}_1}(x) = x f_{X_1}(x)/E[X_1], \quad x \in h\mathbb{N}.
\end{equation}
The authors of, for instance, \cite{denuit2012convex, denuit2020sizebiased, denuit2020largeloss}, use the size-biased transform method to derive properties and results about the conditional mean risk-sharing rule. 

To simplify the notation in the theory developed in the remainder of this paper, we set $h = 1$; that is, we consider only rvs which have integer support. Therefore, for the remainder of this paper, the $n$-variate random vector $\boldsymbol{X}$ takes values in $\mathbb{N}^n$. One may transform a rv with lattice-type support $h\mathbb{N}$ into one of integer support $\mathbb{N}$ by multiplying the rv by the constant $h^{-1}$. By linearity of the unconditional expected allocation, one may easily recover unconditional expected allocations for the original rv. 

Typically, conditional mean risk-sharing and risk allocation are used for small pools or a few lines of businesses. However, the efficient methods based on OGFs proposed in this paper enable these techniques to be used even for a large portfolio of risks. Once equipped with generating functions for unconditional expected allocations, risk managers can perform risk allocation at the customer level. 

\section{Ordinary generating functions for unconditional expected allocations}\label{sec:agf}

\subsection{Ordinary generating functions}

Ordinary generating functions are a useful mathematical tool since they capture every sequence value into one formula. See Chapter 7 of \cite{graham1994concrete} or the monograph \cite{wilf2006generatingfunctionology} for details on generating functions, and Chapter 3 of \cite{sedgewick2013introduction} for efficient algorithms to extract the values of the sequence. Following Section 3.1 of \cite{sedgewick2013introduction}, we define ordinary generating functions.
\begin{definition}[Ordinary generating function]
	For a sequence $\{a_k\}_{k \in \mathbb{N}}$, the function
	\begin{equation} \label{eq:OGF}
	  A(z) = \sum_{k = 0}^\infty a_k z^k, \quad |z| \leq 1,  
	\end{equation}
	is its ordinary generating function (OGF). We use the notation $[z^k]A(z)$	to refer to the coefficient $a_k$, $k \in \mathbb{N}$.
\end{definition}

The following lemma summarizes the relevant operations one can perform on generating functions (see Theorem 3.1 and Table 3.2 of \cite{sedgewick2013introduction} for details).
\begin{lemma} \label{th:OperationsOGF}
If $A(z) = \sum_{k = 0}^\infty a_k z^k$ and $B(z) = \sum_{k = 0}^\infty b_k z^k$ are two OGFs, then the following operations produce OFGs with the corresponding sequences:
\begin{enumerate}
	\item\label{prop:addition} Addition $A(z) + B(z) = \sum_{k = 1}^{\infty} (a_{k} + b_{k})z^k$.
	\item\label{prop:rightshift} Right shift $zA(z) = \sum_{k = 1}^{\infty} a_{k-1}z^k$.
	\item\label{prop:index-multiply} Index multiply $A'(z) = \sum_{k = 0}^{\infty} (k+1)a_{k+1}z^k$.
	\item\label{prop:convolution} Convolution $A(z)B(z) = \sum_{k = 0}^{\infty} \left(\sum_{j = 0}^{k} a_j b_{k-j}\right)z^k$.
	\item\label{prop:partial-sum} Partial sum $A(z)/(1-z) = \sum_{k = 0}^{\infty} \left(\sum_{j = 0}^{k} a_j\right)z^k$.
\end{enumerate}
\end{lemma}

When $a_k \geq 0$ for $k \in \mathbb{N}$ and $\sum_{k=0}^\infty a_k= 1$, then the OGF is the pgf of a discrete rv $X$, denoted as $\mathcal{P}_{X}$, where the values of the pmf of the rv $X$ is $f_X(k) = \Pr(X=k) = a_k$, $k \in \mathbb{N}$. 
The expression in (\ref{eq:OGF}) becomes 
\begin{equation} \label{eq:PGF}
    \mathcal{P}_{X}(z) = \sum_{k=0}^\infty f_X( k) z^k, \quad |z| \leq 1.
\end{equation}
The pgf is an essential tool in all areas of probability, statistics, and actuarial science, as explained, for example, in Section 5.1 in \cite{grimmett2020probability}, and Sections 1.2 and 2.4 in \cite{panjer1992insurance}.

In this paper, we rely on a multivariate ordinary generating function capturing the values of the pmf of a discrete random vector. As described in Section 34.2.1 of \cite{johnson1997discrete}, the multivariate pgf of a vector of discrete rvs $\boldsymbol{X} = (X_1, \dots, X_n)$, with multivariate pmf $f_{\boldsymbol{X}}$, is 
\begin{equation} \label{eq:DefFgpMulti}
\mathcal{P}_{\boldsymbol{X}}(z_1, \dots, z_n)  
= E\left[z_1^{X_1} \times \cdots \times z_n^{X_n}\right] 
= \sum_{k_1 = 0}^{\infty} \cdots \sum_{k_n 
= 0}^{\infty} z_1^{k_1} \times \cdots \times z_n^{k_n} f_{\boldsymbol{X}}(k_1, \dots, k_n), 
\end{equation}
for $|z_j| \leq 1$, $j \in \{1, \dots, n\}$. For more details on multivariate pgfs or ordinary functions and their properties, see Appendix A in \cite{axelrod2015branching}, and Chapter 3 of \cite{flajolet2009analytic}.

The following theorem, a generalization of Theorem 1 of \cite{wang1998aggregation}, shows the usefulness of the multivariate pgf to capture at once the dependence relations between the components of $\boldsymbol{X}$ and aggregation of subsets of them. 

\begin{theorem}\label{thm:pgf-subset}
    Let $\mathcal{A} = \{A_1, \dots, A_m\}$ be a partition of the set $\{1, \dots, n\}$, for $m \leq n$. Define the random vector $\boldsymbol{Y}_{\mathcal{A}} = (Y_{A_1}, \dots, Y_{A_m})$ with $Y_{A_j} = \sum_{i \in A_j} X_i$, for $j \in \{1, \dots, m\}$. Then the multivariate pgf of $\boldsymbol{Y}_\mathcal{A}$ is
    \begin{equation}\label{eq:pgf-subset}
        \mathcal{P}_{\boldsymbol{Y}_{\mathcal{A}}}(t_1, \dots, t_m) = \mathcal{P}_{\boldsymbol{X}}\left(\prod_{j =1}^m t_j^{1_{\{1 \in A_j\}}}, \dots, \prod_{j =1}^m t_j^{1_{\{n \in A_j\}}}\right), \quad |t_j| \leq 1, 
        \hspace{0.5em}
        j \in \{1, \dots, m\}.
    \end{equation}
\end{theorem}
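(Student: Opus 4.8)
\textit{Proof proposal.} The plan is to expand the left-hand side using the definition of the multivariate pgf in \eqref{eq:DefFgpMulti} applied to $\boldsymbol{Y}_{\mathcal{A}}$, and then reorganize the product of powers, exploiting the fact that $\mathcal{A}$ is a \emph{partition}. First I would write, using $Y_{A_j} = \sum_{i \in A_j} X_i$ and the identity $t^{a+b} = t^a t^b$,
$$\mathcal{P}_{\boldsymbol{Y}_{\mathcal{A}}}(t_1, \dots, t_m) = E\!\left[\prod_{j=1}^m t_j^{Y_{A_j}}\right] = E\!\left[\prod_{j=1}^m \prod_{i \in A_j} t_j^{X_i}\right].$$

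Since $\{A_1, \dots, A_m\}$ partitions $\{1, \dots, n\}$, every index $i$ lies in exactly one block, say $A_{j(i)}$; hence the double product over the pairs $(j, i)$ with $i \in A_j$ reindexes, as a bijection, into a single product over $i \in \{1, \dots, n\}$, giving $\prod_{i=1}^n t_{j(i)}^{X_i}$. The key observation is that for each $i$ we may write $t_{j(i)} = \prod_{j=1}^m t_j^{1_{\{i \in A_j\}}}$, because exactly one of the indicators $1_{\{i \in A_j\}}$ equals $1$ — namely $j = j(i)$ — and all others vanish, so the product collapses to $t_{j(i)}$. Substituting this into the expression above and recognizing the resulting expectation as $\mathcal{P}_{\boldsymbol{X}}$ evaluated at the arguments $\prod_{j=1}^m t_j^{1_{\{1 \in A_j\}}}, \dots, \prod_{j=1}^m t_j^{1_{\{n \in A_j\}}}$ via \eqref{eq:DefFgpMulti} yields \eqref{eq:pgf-subset}.

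Before invoking \eqref{eq:DefFgpMulti} one must check that the substitution is admissible, i.e., that the substituted arguments lie in the closed unit disk: this is immediate, since whenever $|t_j| \le 1$ for all $j$, each argument equals some $t_{j(i)}$ and therefore has modulus at most $1$, so all the series involved converge absolutely and the manipulations above are justified. There is essentially no analytic difficulty here; the only point requiring care is the combinatorial bookkeeping — the partition property must be used precisely where we assert that each index $i$ contributes to exactly one block, so that the reindexing is a bijection and each exponent $1_{\{i \in A_j\}}$ selects a single factor. As a sanity check I would note the degenerate case $m = 1$ with $A_1 = \{1, \dots, n\}$, where $Y_{A_1} = S$ and \eqref{eq:pgf-subset} reduces to the familiar identity $\mathcal{P}_S(t) = \mathcal{P}_{\boldsymbol{X}}(t, \dots, t)$, recovering in particular Theorem 1 of \cite{wang1998aggregation}.
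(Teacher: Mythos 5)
Your proof is correct. The paper does not actually supply a proof of this theorem (it is stated as a generalization of Theorem 1 of \cite{wang1998aggregation} and left unproved), and your argument --- expanding $E\bigl[\prod_{j=1}^m t_j^{Y_{A_j}}\bigr]$, using the partition property to reindex the double product into $\prod_{i=1}^n t_{j(i)}^{X_i}$, and collapsing $\prod_{j=1}^m t_j^{1_{\{i \in A_j\}}}$ to $t_{j(i)}$ --- is exactly the standard derivation one would write, with the admissibility of the substitution correctly checked.
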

If the margins of $\boldsymbol{X}$ correspond to risks related to individual business units, then one may apply Theorem \ref{thm:pgf-subset} to obtain the pgf of random vectors at the department or division level within the organizational hierarchy of the entire business. If the margins of $\boldsymbol{X}$ correspond to individual risks in an insurance portfolio, then Theorem \ref{thm:pgf-subset} provides an expression for the pgf of total risks aggregated by coverage type or geographic regions. 

Here are special cases of partitions useful in the context of Theorem \ref{thm:pgf-subset}. When $m = 1$, such that $A_1 = \{1, \dots, n\}$ and $Y_{A_1} = S$, the result in (\ref{eq:pgf-subset}) leads to Theorem 1 of \cite{wang1998aggregation}. For $m = 2$, we have $A_1 \subset \{1, \dots, n\}$ and $A_2 = A_1^C = \{1, \dots, n\} \setminus A_1$, that is, a subset of $\{1, \dots, n\}$ and its complement. Finally, if $m = n$, then $A_j = \{j\}$ and $Y_{A_j} = X_j$, for $j \in \{1, \dots, n\}$. Remark that for each product in the arguments of \eqref{eq:pgf-subset}, only one value of $t_j$, for $j \in \{1, \dots, m\}$, remains since $\mathcal{A}$ is a partition of a set. 

As noted in Section 4.2 of \cite{wang1998aggregation} and Section 5.1 of \cite{grimmett2020probability}, one may use pgfs to extract factorial moments, mixed moments and pmfs. In the remainder of this section, we add unconditional expected allocations to this list.

\subsection{Ordinary generating functions for unconditional expected allocations}

In this paper, our interest is that of computing unconditional expected allocations; hence, we define the function $\mathcal{P}_S^{[i]}(t)$ as the OGF of the sequence of unconditional expected allocations for the rv $X_i$, that is, 
\begin{equation} \label{eq:ogfExpectedAllocation}
  \mathcal{P}_S^{[i]}(t) := \sum_{k = 0}^{\infty} t^{k}E\left[X_i \times 1_{\{S = k\}}\right],  
\end{equation}
for $i \in \{1, \dots, n\}$. 

Aiming to simplify the presentation, unless otherwise specified, we develop formulas for $i = 1$ for the remainder of this paper. One may obtain the other unconditional expected allocations by appropriate reindexing. 

The following theorem is at the basis of the results in this paper and provides a link between the OGF of the unconditional expected allocations of the rv $X_1$ and the multivariate pgf of $\boldsymbol{X}$. 

\begin{theorem} \label{th:TheBigResult}
    If $\boldsymbol{X}$ is a vector of rvs with multivariate pgf $\mathcal{P}_{\boldsymbol{X}}$ and $S$ is the aggregate loss rv, then the expression of $\mathcal{P}_S^{[1]}$ is given by
    \begin{equation} \label{eq:TheBigResult}
        \mathcal{P}_{S}^{[1]}(t) = \left[t_1 \times \frac{\partial }{\partial t_1} \mathcal{P}_{\boldsymbol{X}}(t_1, \dots, t_n)\right]_{t_1 = \dots = t_n = t}.  
    \end{equation}
\end{theorem}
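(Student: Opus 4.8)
The plan is to start from the multivariate pgf $\mathcal{P}_{\boldsymbol{X}}(t_1,\dots,t_n) = E\left[t_1^{X_1}\cdots t_n^{X_n}\right]$ and differentiate under the expectation sign with respect to $t_i$. First I would compute
$$
\frac{\partial}{\partial t_i}\mathcal{P}_{\boldsymbol{X}}(t_1,\dots,t_n)
= E\left[X_i\, t_i^{X_i-1}\prod_{j\neq i}t_j^{X_j}\right],
$$
which is valid for $|t_j|\le 1$ because $E[X_i]<\infty$ is assumed, so the differentiated series is dominated and term-by-term differentiation of the absolutely convergent power series is justified. Multiplying by $t_i$ gives $t_i\,\partial_{t_i}\mathcal{P}_{\boldsymbol{X}} = E\left[X_i\, t_i^{X_i}\prod_{j\neq i}t_j^{X_j}\right]$.

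Next I would set $t_1=\dots=t_n=t$. The product collapses to $t^{X_1+\dots+X_n}=t^{S}$, so
$$
\left[t_i\,\frac{\partial}{\partial t_i}\mathcal{P}_{\boldsymbol{X}}(t_1,\dots,t_n)\right]_{t_1=\dots=t_n=t}
= E\left[X_i\, t^{S}\right].
$$
Finally I would expand $t^S$ by conditioning on the value of $S$: since $X_i$ is nonnegative and $E[X_i]<\infty$, Tonelli/Fubini lets me write
$$
E\left[X_i\, t^{S}\right]
= E\left[X_i\sum_{k=0}^\infty t^k 1_{\{S=k\}}\right]
= \sum_{k=0}^\infty t^k\, E\left[X_i\times 1_{\{S=k\}}\right]
= \mathcal{P}_S^{[i]}(t),
$$
which is exactly the OGF of expected allocations defined in \eqref{eq:ogfExpectedAllocation}. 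Chaining the three displays establishes \eqref{eq:TheBigResult}.

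The only delicate point, and the step I would be most careful about, is the justification of differentiating the multivariate power series term by term and of the interchange of expectation and summation; both rest on the standing assumption $E[X_i]<\infty$, which guarantees that the series $\sum_k k f_{X_i}(k)$ converges and hence that the formally differentiated generating function still converges on $|z|\le 1$ (at least on the closed unit polydisc in the relevant variable, with the others bounded by $1$). Everything else is a routine manipulation of absolutely convergent series. If one prefers to avoid analytic subtleties entirely, an alternative is a purely formal-power-series argument: apply the index-multiply operation (Theorem \ref{th:OperationsOGF}, item \ref{prop:index-multiply}) coefficientwise in $t_i$ and then the diagonal substitution, recovering the same identity at the level of coefficients; I would mention this as a remark but present the analytic version as the main proof.
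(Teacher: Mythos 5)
Your proposal is correct and follows essentially the same route as the paper: differentiate the pgf term by term (the paper does this after first reducing to the bivariate pgf of $(X_1,S_{-1})$ via Theorem \ref{thm:pgf-subset}, you do it directly under the expectation sign), multiply by $t_i$, set all arguments equal to $t$, and regroup the series by the value of $S$ to recover \eqref{eq:ogfExpectedAllocation}. Your explicit attention to the justification of term-by-term differentiation and the Tonelli/Fubini interchange under the standing assumption $E[X_i]<\infty$ is a point the paper passes over silently, but the underlying argument is the same.
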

\begin{proof}
    Applying Theorem \ref{thm:pgf-subset} with $m = 2$, $A_1 = 1$ and $A_2 = \{2, \dots, n\}$, the pgf of $(X_1, S_{-1})$ is
	$$\mathcal{P}_{X_1, S_{-1}}(t_1, t_{-1})  = E\left[t_1^{X_1} t_{-1}^{X_2 + \dots + X_n}\right] = \mathcal{P}_{\boldsymbol{X}}(t_1, t_{-1}, \dots, t_{-1})$$
	for $|t_1| \leq 1$ and $|t_{-1}| \leq 1$. We define
	\begin{equation*}\label{eq:eagf1}
		\mathcal{P}^{[1]}_{X_1, S_{-1}}(t_1, t_{-1})  = \sum_{k_1 = 0}^{\infty} \sum_{k_2 = 0}^{\infty} k_1t_1^{k_1} t_{-1}^{k_2} f_{X_1, S_{-1}}(k_1, k_2),
	\end{equation*}
	which becomes
	\begin{equation*}\label{eq:eagf1b}
	   \mathcal{P}^{[1]}_{X_1, S_{-1}}(t_1, t_{-1})= t_1 \frac{\partial}{\partial t_1}\sum_{k_1 = 0}^{\infty} \sum_{k_2 = 0}^{\infty} t_1^{k_1} t_{-1}^{k_2} f_{X_1, S_{-1}}(k_1, k_2) =  t_1 \times \frac{\partial }{\partial t_1} \mathcal{P}_{X_1, S_{-1}}(t_1, t_{-1}).
	\end{equation*}

	Finally, it follows from the same arguments as in Theorem \ref{thm:pgf-subset} that $\mathcal{P}^{[1]}_{S}(t) = \mathcal{P}^{[1]}_{X_1, S_{-1}}(t, t)$, which becomes
	\begin{align}
		\mathcal{P}^{[1]}_{S}(t) = \sum_{k_1 = 0}^{\infty} \sum_{k_2 = 0}^{\infty} k_1 t^{k_1} t^{k_2} f_{X_1, S_{-1}}(k_1, k_2)
		&= \sum_{k = 0}^{\infty} t^{k}\sum_{k_1 = 0}^{k} k_1 f_{X_1, S_{-1}}(k_1, k-k_1)= \sum_{k = 0}^{\infty} t^{k}E\left[X_1 \times 1_{\{S = k\}}\right],\label{eq:conditional-mean-generating-function}
	\end{align}
	where \eqref{eq:conditional-mean-generating-function} is the power series representation in \eqref{eq:ogfExpectedAllocation} of unconditional expected allocations, as desired.
\end{proof}

From the uniqueness theorem of pgfs (see, for instance, Section 5.1 of \cite{grimmett2020probability}), one can recover the values of $E\left[X_1 \times 1_{\{S = k\}}\right]$, $k \in \mathbb{N}$, by differentiating
$$[t^k]\mathcal{P}^{[1]}_{S}(t) = E\left[X_1 \times 1_{\{S = k\}}\right] = \frac{1}{k!} \left.\frac{\diff^k}{\diff t^k} \mathcal{P}^{[1]}_{S}(t)\right\vert_{t = 0},$$
or using an algorithm to extract the coefficients of a polynomial. The entire Section \ref{ss:fft} provides a method using FFT to extract the coefficients from the OGF for unconditional expected allocations. Consequently, \eqref{eq:conditional-mean-generating-function} is a powerful tool to capture every unconditional expected allocation for $X_1$ within a single function. 

An especially convenient corollary holds for allocating a rv independent from the remaining risks.
\begin{corollary}\label{cor:indep-s-x}
	If $X_1$ and $S_{-1}$ are independent, 
	then the expression of $\mathcal{P}_{S}^{[1]}$ in (\ref{eq:TheBigResult}) becomes 
	\begin{equation}\label{eq:agf-indep}
		\mathcal{P}_{S}^{[1]}(t) = t \mathcal{P}'_{X_1}(t) \mathcal{P}_{S_{-1}}(t).
	\end{equation}
\end{corollary}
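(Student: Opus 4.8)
The plan is to specialize the identity established inside the proof of Theorem~\ref{th:TheBigResult} to the independent case, so that essentially no new work beyond a product-rule differentiation is needed. First I would recall from that proof the intermediate expression
\[
\mathcal{P}^{[1]}_{S}(t) = \left[t_1 \times \frac{\partial}{\partial t_1}\mathcal{P}_{X_1, S_{-1}}(t_1, t_{-1})\right]_{t_1 = t_{-1} = t},
\]
where $\mathcal{P}_{X_1,S_{-1}}$ is the bivariate pgf of $(X_1, S_{-1})$. Then I would invoke independence of $X_1$ and $S_{-1}$, which gives the factorization $\mathcal{P}_{X_1,S_{-1}}(t_1,t_{-1}) = E[t_1^{X_1}]\,E[t_{-1}^{S_{-1}}] = \mathcal{P}_{X_1}(t_1)\,\mathcal{P}_{S_{-1}}(t_{-1})$ for $|t_1|\le 1$, $|t_{-1}|\le 1$.

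Next I would differentiate this product with respect to $t_1$. Since $\mathcal{P}_{S_{-1}}(t_{-1})$ does not depend on $t_1$, we obtain
\[
\frac{\partial}{\partial t_1}\bigl[\mathcal{P}_{X_1}(t_1)\,\mathcal{P}_{S_{-1}}(t_{-1})\bigr] = \mathcal{P}'_{X_1}(t_1)\,\mathcal{P}_{S_{-1}}(t_{-1}).
\]
Multiplying by $t_1$ and evaluating at $t_1 = t_{-1} = t$ yields $\mathcal{P}^{[1]}_{S}(t) = t\,\mathcal{P}'_{X_1}(t)\,\mathcal{P}_{S_{-1}}(t)$, which is exactly \eqref{eq:agf-indep}.

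I do not expect a genuine obstacle here; the only point needing a word of care is the legitimacy of the termwise differentiation. For $|t|<1$ this is automatic for power series, and at $t=1$ the derivative $\mathcal{P}'_{X_1}(1) = E[X_1]$ exists as a left limit by the standing assumption $E[X_1]<\infty$, so the formula extends to the closed disc. As an independent cross-check, one could instead argue directly at the level of coefficients: under independence $E[X_1\times 1_{\{S=k\}}] = \sum_{j=0}^{k} j\,f_{X_1}(j)\,f_{S_{-1}}(k-j)$, which is the convolution of the sequence $\{j f_{X_1}(j)\}_{j\in\mathbb{N}}$ with $\{f_{S_{-1}}(j)\}_{j\in\mathbb{N}}$; by the convolution property (Theorem~\ref{th:OperationsOGF}, item~\ref{prop:convolution}) its OGF is the product of the two OGFs, and the OGF of $\{j f_{X_1}(j)\}_j$ is $t\,\mathcal{P}'_{X_1}(t)$ by combining the index-multiply and right-shift operations (items~\ref{prop:rightshift} and~\ref{prop:index-multiply}), recovering the same result.
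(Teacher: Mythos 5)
Your argument is correct and is exactly the route the paper intends: the corollary is stated without proof as an immediate specialization of Theorem~\ref{th:TheBigResult}, and you supply precisely that one-line specialization (factorization of the bivariate pgf under independence, product rule, evaluation at $t_1=t_{-1}=t$), together with a sound remark on termwise differentiation and a valid coefficient-level cross-check via the convolution property.
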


Aiming for a more efficient method to compute the expectations, we find an OGF for the unconditional expected cumulative allocation defined in (\ref{eq:expected-cumul-allocation}).

\begin{corollary} \label{thm:cumul-exp-allocations}
	If $\boldsymbol{X}$ is a vector of rvs with multivariate pgf $\mathcal{P}_{\boldsymbol{X}}$ and $S$ is the aggregate rv,
	then the function $$\mathcal{P}_{S}^{[1]}(t)/(1-t) = \frac{1}{1 - t}\left[t_1 \times \frac{\partial }{\partial t_1} \mathcal{P}_{\boldsymbol{X}}(t_1, \dots, t_n)\right]_{t_1 = \dots = t_n = t} $$ 
	is the OGF of the sequence of cumulative unconditional expected allocations $\{E[X_1 \times 1_{\{S \leq k\}}]\}_{k \in \mathbb{N}}$.
\end{corollary}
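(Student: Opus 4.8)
The plan is to obtain this corollary as an immediate application of the partial-sum rule for ordinary generating functions (Theorem~\ref{th:OperationsOGF}, item~\ref{prop:partial-sum}) to the OGF for expected allocations supplied by Theorem~\ref{th:TheBigResult}. Concretely, I would start by writing $\mathcal{P}_S^{[1]}(t) = \sum_{k=0}^\infty a_k t^k$ with $a_k = E[X_1 \times 1_{\{S=k\}}]$, which is exactly the content of \eqref{eq:ogfExpectedAllocation}--\eqref{eq:TheBigResult}.

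Next, I would invoke item~\ref{prop:partial-sum} of Theorem~\ref{th:OperationsOGF}, which gives
$$\frac{\mathcal{P}_S^{[1]}(t)}{1-t} = \sum_{k=0}^\infty \left(\sum_{j=0}^k a_j\right) t^k.$$
It then remains only to identify the coefficient sequence. Since we have set $h=1$, definition \eqref{eq:expected-cumul-allocation} reads $E[X_1 \times 1_{\{S \leq k\}}] = \sum_{j=0}^k E[X_1 \times 1_{\{S=j\}}] = \sum_{j=0}^k a_j$, so $[t^k]\bigl(\mathcal{P}_S^{[1]}(t)/(1-t)\bigr) = E[X_1 \times 1_{\{S \leq k\}}]$ for every $k \in \mathbb{N}$, which is precisely the assertion. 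Substituting the explicit expression for $\mathcal{P}_S^{[1]}(t)$ from \eqref{eq:TheBigResult} then yields the displayed formula in the statement.

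The only point requiring a word of care --- and the closest thing to an obstacle --- is the analytic status of the division by $1-t$. Because $\sum_{k} a_k = E[X_1] < \infty$ by \eqref{eq:expectation}, the series $\mathcal{P}_S^{[1]}(t)$ converges for $|t| \leq 1$; moreover the partial sums $\sum_{j=0}^k a_j = E[X_1 \times 1_{\{S \leq k\}}]$ are nonnegative and bounded above by $E[X_1]$, so the series $\sum_k \bigl(\sum_{j=0}^k a_j\bigr) t^k$ has radius of convergence at least $1$ and the manipulation is valid as an identity of functions on $|t| < 1$ (equivalently, of formal power series). I would state and prove the corollary with this understanding, exactly as item~\ref{prop:partial-sum} of Theorem~\ref{th:OperationsOGF} is used throughout, so no genuinely new estimate is needed.
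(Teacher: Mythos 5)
Your proposal is correct and follows exactly the paper's own one-line proof: apply the partial-sum operation (item~\ref{prop:partial-sum} of Theorem~\ref{th:OperationsOGF}) to the OGF $\mathcal{P}_S^{[1]}$ from Theorem~\ref{th:TheBigResult} and identify the resulting coefficients with $E[X_1 \times 1_{\{S \leq k\}}]$ via \eqref{eq:expected-cumul-allocation}. Your added remark on convergence for $|t|<1$ is a sensible extra precaution that the paper leaves implicit, but it does not change the argument.
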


\begin{proof}
	Applying operation \ref{prop:partial-sum} of Lemma \ref{th:OperationsOGF}, 
	we have
	\begin{equation}\label{eq:cumul-agf}
		\frac{\mathcal{P}^{[1]}_{S}(t)}{1 - t} = \sum_{k = 0}^{\infty} t^{k}\left(\sum_{j = 0}^{k}E\left[X_1 \times 1_{\{S = j\}}\right]\right) = \sum_{k = 0}^{\infty} t^{k}E\left[X_1 \times 1_{\{S \leq k\}}\right].
	\end{equation}
\end{proof}

\subsection{Outline of the FFT approach to compute unconditional expected allocations}\label{ss:fft}

Equipped with an OGF for unconditional expected allocations, one may seek to solve for the unconditional expected allocations analytically. 
In some cases, analytical inversion of the OGF will be possible, but otherwise, we may resort to numerical algorithms to compute the unconditional expected allocations. 
This section provides an algorithm to recover the unconditional expected allocations based on their OGF. 

A significant advantage of working with pgfs (and more generally, with OGFs) is that the FFT algorithm of \cite{cooley1965algorithm} provides an efficient method to extract the values of OGFs, as explained in Chapter 30 of \cite{cormen2009introductiona}. See also \cite{embrechts1993applications} for applications of the FFT algorithm in actuarial science and quantitative risk management.

Define the characteristic function of $S$ as 
$$\phi_{S}(t) := E\left[\e^{itS}\right] = \mathcal{P}_{S}\left(\e^{it}\right), \quad |t| \leq 1$$
and analogously, the characteristic version of the OGF for unconditional expected allocations,
$$\phi^{[1]}_{S}(t) := \sum_{k = 0}^{\infty} \e^{itk}E\left[X_1 \times 1_{\{S = k\}}\right] = \mathcal{P}^{[1]}_{S}\left(\e^{it}\right), \quad |t| \leq 1.$$
In this section, we aim to recover the values of $E[X_1 \times 1_{\{S = k\}}]$ using the discrete Fourier transform (DFT). Set $\boldsymbol{f}_X = (f_{X}(0), f_{X}(1), \dots, f_{X}(k_{max} -1))$ for a truncation point $k_{max} \in \mathbb{N}$. Here we assume that $f_X(k)= 0$ for $k \geq k_{max}$ such that there is no truncation error. The DFT of $\boldsymbol{f}_X$, noted $\widehat{\boldsymbol{f}}_X = (\widehat{f}_{X}(0), \widehat{f}_{X}(1), \dots, \widehat{f}_{X}(k_{max} -1))$, is  
\begin{equation}\label{eq:fft}
	\widehat{f}_{X}(k) = \sum_{j = 0}^{k_{max} - 1} f_{X}(j) \e^{i2\pi jk/k_{max}},\quad k = 0, \dots, k_{max} - 1.
\end{equation}
The inverse DFT can recover the original sequence with 
\begin{equation}\label{eq:fft-inv}
	f_{X}(k) = \frac{1}{k_{max}} \sum_{j = 0}^{k_{max} - 1}\widehat{f}_{X}(j)\e^{-i2\pi jk/k_{max}}, \quad k = 0, \dots, k_{max} - 1.
\end{equation}
The authors of \cite{embrechts2009panjer} explain how computing the pmf of a compound sum is more efficient with the FFT than using Panjer recursion or direct convolution. We now show how to apply the FFT algorithm to compute unconditional expected allocations. Let $\mu_{1:k} = E[X_1 \times 1_{\{S = k\}}]$ for $k = 0, \dots, k_{max} - 1$ and $\boldsymbol{\mu}_{1} = (\mu_{1:0}, \dots, \mu_{1:(k_{max} - 1)})$ with the obvious case $\mu_{1:0} = 0$. Then, the discrete Fourier transform of $\boldsymbol{\mu}_1$, noted $\widehat{\boldsymbol{\mu}}_{1} = (\widehat{\mu}_{1:0}, \dots, \widehat{\mu}_{1:(k_{max}-1)})$, is 
\begin{equation}\label{eq:dft-allocation}
	\widehat{\mu}_{1:j} = \mathcal{P}^{[1]}_{S}\left(\e^{i2\pi j/k_{max}}\right), \quad j = 0, \dots, k_{max}-1.
\end{equation}
For notational convenience, we write the vector $\{\e^{i2\pi j/k_{max}}\}_{0\leq j \leq k_{max - 1}}$ as $\widehat{\boldsymbol{\e}}_1$. Then, we have that $\widehat{\boldsymbol{\mu}}_{1} = \mathcal{P}^{[1]}_S(\widehat{\boldsymbol{\e}}_1)$. Computing the inverse DFT of \eqref{eq:dft-allocation} yields the values of $E[X_1 \times 1_{\{S = k\}}]$ for $k = 0, \dots, k_{max}-1$. If $k_{max}$ is a power of 2, algorithms like the FFT of \cite{cooley1965algorithm} are especially efficient. 

Note that computing the cumulative unconditional expected allocations is trickier since division by $(1-t)$ is undefined for $|t| = 1$. One, therefore, requires simplifications before applying the FFT algorithm to the OGF of cumulative unconditional expected allocations. In practice, one only obtains a slight numerical advantage from using the FFT algorithm for cumulative unconditional expected allocations, which has algorithmic complexity $O(n \log n)$. Suppose one computes unconditional expected allocations with the FFT algorithm and takes the cumulative sum of the result. In that case, the algorithmic complexity remains $O(n \log n)$.

One consideration when using the FFT algorithm is that one must select a truncation point large enough such that $f_{S}(k_{max}) = 0$. One could have a large value of $k_{max}$ if $S$ is a large portfolio or if individual risks have heavy tails. In the context of peer-to-peer insurance with a stop loss reinsurance contract with trigger $\omega$, we have $f_{S}(x) = 0$ for $x > \omega$, and $E\left[X_1 \times 1_{\{S = k\}}\right] = E\left[X_1 \times 1_{\{S = \omega\}}\right]$ for all $k \geq \omega$; thus stop loss contracts sets an upper bound to the truncation point required. 

If $X_1$ is a discrete rv, independent of $S_{-1}$, then the OGF for unconditional expected allocations is given by \eqref{eq:agf-indep}. If we have no closed-form solution for $\mathcal{P}_{X_1}'(t)$, then one can compute the DFT of $t\mathcal{P}_{X_1}'(t)$ by using the pmf of $X_1$ and the properties of OGFs. One can compute the pgf of $X_1$ as $\mathcal{P}_{X_1}(t) = \sum_{k = 0}^{\infty}t^k f_{X_1}(k)$ and $t\mathcal{P}_{X_1}'(t) = \sum_{k = 0}^{\infty} t^k kf_{X_1}(k)$. It follows that one can compute the DFT of $t\mathcal{P}_{X_1}'(t)$ as the DFT of the vector $\left\{k f_{X_1}(k)\right\}_{k \in \mathbb{N}}$. We can compute the DFT of $t\mathcal{P}_{X_1}'(t)/(1-t)$ as the DFT of the partial sum of the vector $\left\{k f_{X_1}(k)\right\}_{k \in \mathbb{N}}$.

To apply the generating function approach with the FFT algorithm (or other efficient convolution algorithms) when the rvs are continuous, one must discretize their continuous cdfs for a step size $h \in \mathbb{R}^+$.  
For a brief presentation of the upper, lower, and mean preserving discretization methods and their applications with the FFT algorithm, 
see, for instance, Section 5 of \cite{barges2009tvarbased} and Section 2 of \cite{embrechts2009panjer}. Stochastic order properties for each of these three methods are examined in Chapter 1 of \cite{muller2002comparison}.

\section{Implications for Katz distributions}\label{sec:ab0}

Let $M$ be a positive discrete rv following a distribution belonging to the Katz family of distributions (see \cite{katz1965unified}, Section 2.5.4 of \cite{winkelmann2008econometric} and Section 2.3.1 of \cite{johnson2005univariate}), also referred to as the $(a,b,0)$ family of distributions in \cite{klugman2018loss}. The pmf of $M$ satisfies the recursive relation 
\begin{equation} \label{eq:KatzPmfRecursive}
    f_M(k) = (a + b/k) f_M(k-1), \quad k\in \mathbb{N}_1,
\end{equation}
where $a<1$ and $b>0$. The expectation is $E[M] = b/(1-a)$, while the variance is $Var(M) = b/(1-a)^2$. One derives the following result for the pgf of $M$ from (\ref{eq:KatzPmfRecursive}). 
\begin{lemma}\label{lemma:pmf-deriv}
    If $M$ follows a Katz distribution, then its pgf satisfies the differential equation $\mathcal{P}_M'(t) = (a + b)/(1 - at) \mathcal{P}_M(t)$. 
\end{lemma}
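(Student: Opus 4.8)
The plan is to work directly from the power-series definition of the pgf and turn the pmf recurrence \eqref{eq:KatzPmfRecursive} into an identity among power series, which then rearranges into the claimed differential equation. First I would write $\mathcal{P}_M(t) = \sum_{k=0}^\infty f_M(k) t^k$ and, differentiating term by term (legitimate for $|t| < 1$, inside the radius of convergence, since $\{f_M(k)\}_{k\in\mathbb{N}}$ is a probability sequence), obtain $\mathcal{P}_M'(t) = \sum_{k=1}^\infty k f_M(k) t^{k-1}$.

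The key step is to clear the denominator in \eqref{eq:KatzPmfRecursive}: multiplying by $k$ gives the polynomial recurrence $k f_M(k) = (ak+b) f_M(k-1)$ for $k \in \mathbb{N}^+$. Substituting this into the series for $\mathcal{P}_M'$ and reindexing with $j = k-1$ yields
\begin{equation*}
    \mathcal{P}_M'(t) = \sum_{j=0}^\infty \bigl(a(j+1)+b\bigr) f_M(j)\, t^j = a\sum_{j=0}^\infty j f_M(j)\, t^j + (a+b)\sum_{j=0}^\infty f_M(j)\, t^j.
\end{equation*}
Recognizing the two sums on the right as $at\,\mathcal{P}_M'(t)$ and $(a+b)\,\mathcal{P}_M(t)$ respectively gives $\mathcal{P}_M'(t) = at\,\mathcal{P}_M'(t) + (a+b)\,\mathcal{P}_M(t)$, i.e. $(1-at)\,\mathcal{P}_M'(t) = (a+b)\,\mathcal{P}_M(t)$. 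Solving for $\mathcal{P}_M'(t)$ produces the stated identity $\mathcal{P}_M'(t) = \frac{a+b}{1-at}\,\mathcal{P}_M(t)$, valid wherever $1-at \neq 0$, in particular on the disk $|t| < 1$.

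There is no real obstacle here; the only bookkeeping to watch is the treatment of the $k=0$ term. The recurrence is only asserted for $k \geq 1$, and the $k=0$ term contributes nothing to $\mathcal{P}_M'$, so nothing is lost by starting the sum at $k=1$; after reindexing, the $j=0$ term reappears with coefficient $a+b$, which is exactly what the formula requires. One should also note that term-by-term differentiation and the rearrangement/reindexing of the (absolutely convergent) series are justified on $|t|<1$, so the manipulations above are rigorous and not merely formal.
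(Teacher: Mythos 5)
Your proof is correct. The paper does not prove this lemma itself but simply cites Section 4.5.1 of Dickson (2017), and your argument — multiplying the Katz recurrence by $k$, summing against $t^{k-1}$, and reindexing to isolate $at\,\mathcal{P}_M'(t)$ and $(a+b)\,\mathcal{P}_M(t)$ — is exactly the standard derivation found there, with the convergence and $1-at\neq 0$ caveats handled appropriately.
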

\begin{proof}
    See Section 4.5.1 of \cite{dickson2017insurance}. 
\end{proof}

Note that the solution to the differential equation in Lemma \ref{lemma:pmf-deriv}, as provided in equation (2.41) of \cite{johnson2005univariate}, is $\mathcal{P}_M(t) = [(1-a)/(1-at)]^\wedge (b/a+ 1)$, for $|t| \leq 1$ and $a \neq 0$.

Members of the Katz family are the Poisson distribution (with $a = 0$ and $b = \lambda$), the binomial distribution (with $a = -q/(1-q)$ and $b = (n + 1)q/(1-q)$) and the negative binomial distribution with pmf given by 
$$f_{M}(k) = \binom{r + k - 1}{k} q^r (1-q)^k, \quad k \in \mathbb{N},$$
with $a = 1-q$ and $b = (r-1)(1-q)$. Note that each distribution has different starting values within the recursive relation (provided in the references above), but the starting values aren't required in the current paper. 

\subsection{Allocations and family of Katz distributions}

The following theorem presents an efficient formula to compute unconditional expected allocations.
\begin{theorem}\label{thm:ab0}
	Let $X_1$ follow a Katz distribution independent of $S_{-1}$. For $|a| < 1$ and $k \in \mathbb{N}_1$, we have
	\begin{equation}\label{eq:allocation-ab0}
		[t^k]\mathcal{P}^{[1]}_S(t) = E\left[X_1 \times 1_{\{S = k\}}\right] = (a+b)\sum_{j = 0}^{k-1} a^{j}f_{S}(k-1-j)
	\end{equation}
	and 
	\begin{subequations}
	\begin{align}
		[t^k]\left\{\frac{\mathcal{P}^{[1]}_S(t)}{1-t}\right\} = E\left[X_1 \times 1_{\{S \leq k\}}\right] &=(a + b) \sum_{j = 0}^{k-1}a^{j} F_S(k - 1 - j)\label{eq:cumul-allocation-ab0-1}\\
		&=(a + b) \sum_{j = 0}^{k-1} \frac{1 - a^{j+1}}{1 - a}f_{S}(k-1-j).\label{eq:cumul-allocation-ab0-2}
	\end{align}	
	\end{subequations}
\end{theorem}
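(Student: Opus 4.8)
The plan is to combine Corollary~\ref{cor:indep-s-x} with the differential equation for the Katz pgf from Lemma~\ref{lemma:pmf-deriv}, and then extract coefficients using the convolution and partial-sum operations of Theorem~\ref{th:OperationsOGF}. Since $X_1$ is independent of $S_{-1}$, Corollary~\ref{cor:indep-s-x} gives $\mathcal{P}_S^{[1]}(t) = t\,\mathcal{P}_{X_1}'(t)\,\mathcal{P}_{S_{-1}}(t)$. Applying Lemma~\ref{lemma:pmf-deriv} to $X_1$ yields $\mathcal{P}_{X_1}'(t) = \frac{a+b}{1-at}\mathcal{P}_{X_1}(t)$, so that
\begin{equation*}
    \mathcal{P}_S^{[1]}(t) = (a+b)\,\frac{t}{1-at}\,\mathcal{P}_{X_1}(t)\,\mathcal{P}_{S_{-1}}(t) = (a+b)\,\frac{t}{1-at}\,\mathcal{P}_S(t),
\end{equation*}
where the last equality uses independence again to write $\mathcal{P}_{X_1}(t)\mathcal{P}_{S_{-1}}(t) = \mathcal{P}_S(t)$.

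Next I would expand $\frac{1}{1-at} = \sum_{j=0}^\infty a^j t^j$ (valid for $|at|<1$, in particular near $t=0$, which is all that is needed for coefficient extraction), so that $\frac{t}{1-at} = \sum_{j=0}^\infty a^j t^{j+1}$ is the OGF of the sequence that is $0$ at index $0$ and $a^{j-1}$ at index $j\ge 1$. Then $\mathcal{P}_S^{[1]}(t)$ is $(a+b)$ times the convolution of this sequence with $\{f_S(k)\}_{k\in\mathbb{N}}$; applying operation~\ref{prop:convolution} of Theorem~\ref{th:OperationsOGF} and shifting the index gives
\begin{equation*}
    [t^k]\mathcal{P}_S^{[1]}(t) = (a+b)\sum_{j=0}^{k-1} a^j f_S(k-1-j), \quad k\in\mathbb{N}^+,
\end{equation*}
which is \eqref{eq:allocation-ab0}. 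For the cumulative version, I would apply operation~\ref{prop:partial-sum} (division by $1-t$) to the identity just obtained: this turns $\{f_S(k)\}$ into $\{F_S(k)\}$ inside the convolution, yielding \eqref{eq:cumul-allocation-ab0-1}. Alternatively, dividing the closed form $\frac{(a+b)t}{(1-at)(1-t)}\mathcal{P}_S(t)$ directly and expanding $\frac{1}{(1-at)(1-t)} = \sum_{j\ge0}\frac{1-a^{j+1}}{1-a}t^j$ (a standard geometric-series identity, which also covers $a=0$ by continuity) gives \eqref{eq:cumul-allocation-ab0-2} after the same convolution-and-shift bookkeeping.

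There is no serious obstacle here; the argument is a direct chain of the operations in Theorem~\ref{th:OperationsOGF} applied to the product form handed to us by Corollary~\ref{cor:indep-s-x} and Lemma~\ref{lemma:pmf-deriv}. The one point deserving a word of care is the index arithmetic in the convolution: the factor $t$ in $\frac{t}{1-at}$ produces the shift that makes the summation run to $k-1$ rather than $k$ and makes $f_S$ be evaluated at $k-1-j$ rather than $k-j$, and one should check the boundary case $k=0$ separately (both sides vanish, consistent with $\mu_{1:0}=0$). The equivalence of \eqref{eq:cumul-allocation-ab0-1} and \eqref{eq:cumul-allocation-ab0-2} is then just the elementary identity $\sum_{j=0}^{k-1}a^j F_S(k-1-j) = \sum_{j=0}^{k-1}\frac{1-a^{j+1}}{1-a}f_S(k-1-j)$, obtained by writing $F_S(m)=\sum_{\ell=0}^m f_S(\ell)$ and interchanging the order of summation.
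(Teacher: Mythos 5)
Your proposal is correct and follows essentially the same route as the paper: Corollary~\ref{cor:indep-s-x} combined with Lemma~\ref{lemma:pmf-deriv} to obtain $\mathcal{P}_S^{[1]}(t) = (a+b)\,\frac{t}{1-at}\,\mathcal{P}_S(t)$, followed by coefficient extraction via the convolution and partial-sum operations of Theorem~\ref{th:OperationsOGF}, with the partial-fraction expansion of $\frac{t}{(1-at)(1-t)}$ yielding \eqref{eq:cumul-allocation-ab0-2} exactly as in the paper's alternative derivation. Your additional care with the index shift and the $k=0$ boundary case is consistent with, and slightly more explicit than, the published argument.
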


\begin{proof}
	Applying Corollary \ref{cor:indep-s-x} and Lemma \ref{lemma:pmf-deriv}, the OGF for unconditional expected allocations is
	\begin{equation}\label{eq:pgfab0}
		\mathcal{P}^{[1]}_S(t) = t\mathcal{P}'_{X_1}(t) \mathcal{P}_{S_{-1}}(t) = t\frac{a + b}{1 - at} \mathcal{P}_{X_1}(t)\mathcal{P}_{S_{-1}}(t) = t\frac{a + b}{1 - at} \mathcal{P}_{S}(t).
	\end{equation}
	Then, \eqref{eq:allocation-ab0} follows from Property \ref{prop:convolution} of OGFs in Lemma \ref{th:OperationsOGF}. The relation in \eqref{eq:cumul-allocation-ab0-1} follows from another application of Property \ref{prop:convolution} of Lemma \ref{th:OperationsOGF} to \eqref{eq:allocation-ab0}. Alternatively, the OGF for cumulative unconditional expected allocations is
	\begin{equation}\label{eq:pgfab0-cumul}
		\frac{\mathcal{P}^{[1]}_S(t)}{1-t} =  (a + b)\frac{t}{(1 - at)(1-t)} \mathcal{P}_{S}(t) = \frac{a+b}{a-1}\mathcal{P}_{S}(t)\left(\frac{1}{1 - at}-\frac{1}{1-t}\right) = \frac{a+b}{a-1}\mathcal{P}_{S}(t) \sum_{k = 0}^{\infty} t^k \left(a^k - 1\right)
	\end{equation}
	for $|t|<1$. Then, \eqref{eq:cumul-allocation-ab0-2} also follows from the convolution property of OGFs in Lemma \ref{th:OperationsOGF}. 
\end{proof}

Notice that \eqref{eq:allocation-ab0} and \eqref{eq:cumul-allocation-ab0-2} require the same number of computations, so it isn't more complex to compute cumulative unconditional expected allocations than individual valued allocations. We also have the relationship
\begin{equation}\label{eq:allocation-vs-cumul-allocation}
	(a-1)E\left[X_1 \times 1_{\{S \leq k\}}\right] = aE\left[X_1 \times 1_{\{S = k\}}\right]- (a+b)F_{S}(k-1)
	,\quad |a|< 1, \quad k \in \mathbb{N}.
\end{equation}

We list the implications of Theorem \ref{thm:ab0} in Table \ref{tab:ab0-implications}, which hold whenever $X_1$ and $(X_2, \dots, X_n)$ are independent, 
even if the random vector $(X_2, \dots, X_n)$ has a complicated dependence structure. The following two examples are special cases of Theorem \ref{thm:ab0} with practical interest. Both examples show that by first writing the problem in the transformed space, we can then invert the OGF back to the original probability space to obtain closed-form or recursive-type expressions for unconditional expected allocations. In these cases, we do not require the FFT algorithm to compute the unconditional expected allocations directly (although one may need to use the FFT algorithm to compute the pmf of $S$). 

\begin{example}[Poisson distributions]
Assume that $X_1, \dots, X_n$ are independent with $X_i \sim Pois(\lambda_i)$, $i \in \{1,\ldots,n\}$. Then, $S \sim Pois(\lambda_S)$, with $\lambda_S = \lambda_1 + \dots + \lambda_n$. From (\ref{eq:allocation-ab0}) of Theorem \ref{thm:ab0}, we recover the result presented in Section 10.3 of \cite[page 413]{marceau2013modelisation}, 
$$[t^k]\mathcal{P}_S^{[1]}(t) = E\left[X_1 \times 1_{\{S = k\}}\right] = \lambda_1 \frac{\lambda_S ^{k-1}e^{-\lambda_S}}{(k-1)!} = \frac{\lambda_1}{\lambda_S} k \Pr(S = k), \quad k \in \mathbb{N}.$$
Thus, we have $E[X_1 | S = k] = \lambda_1/\lambda_S k$, which is a linear function of $k$, hence the contribution under the conditional mean risk-sharing rule coincides with the contribution under the proportional (or linear) allocation rule. 
\end{example}

\begin{example}[Negative binomial distributions]
Assume that $X_1, \dots, X_n$ are independent with $X_i \sim NB(r_i, q_i)$, $i \in \{1,\ldots,n\}$. Inserting $a = (1-q_1)$ and $b = (r_1-1)(1-q_1)$ into \eqref{eq:pgfab0}, we have
\begin{align*}
	\mathcal{P}_{S}^{[1]}(t) = t\frac{r_1(1-q_1)}{1 - (1-q_1)t} \mathcal{P}_{S}(t) 
	&= \frac{r_1(1-q_1)}{q_1} t \left(\frac{q_1}{1-(1-q_1)t}\right)^{r_1+1}\mathcal{P}_{S_{-1}}(t), \quad |t| \leq 1.
\end{align*}
We can define $$\mathcal{P}_{S^*}(t) := t \left(\frac{q_1}{1-(1-q_1)t}\right)^{r_1+1}\mathcal{P}_{S_{-1}}(t),$$
which corresponds to the pgf of a rv $S^*$ whose pmf is the convolution of the pmfs of $n$ negative binomial distributed rvs, shifted to the right by one. 
It follows that $\mathcal{P}_{S}^{[1]}(t) = r_1(1-q_1)/q_1\mathcal{P}_{S^*}(t)$, which we can invert to the original space and apply Theorem 1 of \cite{furman2007convolution} to obtain
\begin{align*}
	[t^k]\mathcal{P}_{S}^{[1]}(t) = E\left[X_1 \times 1_{\{S = k\}}\right] &= r_1 \frac{1-q_1}{q_1} R \sum_{\ell = 0}^{\infty} \delta_{\ell} \binom{r+\ell + k -1}{k} (q^*)^{r+1+\ell} (1-q^*)^{k-1},
\end{align*}
where $q^* = \min(q_1, \dots, q_n)$, $r = r_1 + \dots + r_n$,
$$R = \left(\frac{q^*}{1-q^*}\right)^n\prod_{j = 1}^{n}\frac{(1-q_j)}{q_j}$$
$$\delta_{\ell+1} = \frac{1}{\ell+1} \sum_{i = 1}^{\ell+1}i \xi_i \delta_{\ell+1-i}; \quad \delta_0 = 1; \quad \ell \in \mathbb{N}_0$$
and
$$\xi_i = \frac{r_1 + 1}{i} \left(1 - \frac{(1-q^*)q_1}{q^*(1-q_1)}\right) + \sum_{j = 2}^{n} \frac{r_j}{i} \left(1-\frac{(1-q^*)q_j}{q^*(1-q_j)}\right).$$
\end{example}

For binomial distributions, one requires the success probability to satisfy $q < 1/2$ such that $|a| < 1$. Alternately, if $q > 1/2$, one could express the problem in terms of failure probability $1 - q$ and then apply Theorem \ref{thm:ab0}. 
\begin{table}[ht]
	\centering
	\resizebox{\textwidth}{!}{
	\begin{tabular}{cccc}
		\hline
		&       Poisson        &                         Negative binomial                         &                                            Binomial                                            \\ \hline
		$a$                &          0           &                               $1-q$                               &                                           $-q/(1-q)$, for $0 < q < 1/2$                                           \\
		$b$                &      $\lambda$       &                           $(r-1)(1-q)$                            &                                         $(n+1)q/(1-q)$                                         \\
		$E[X_1 \times 1_{\{S = k\}}]$     & $\lambda f_S(k - 1)$ &               $r\sum_{j = 1}^{k}(1-q)^{j} f_S(k-j)$               &                 $ n \sum_{j = 1}^{k} (-1)^{j+1} \left(\frac{q}{1-q}\right)^{j} f_S(k-j)$                 \\
		$E[X_1 \times 1_{\{S \leq k\}}]$ (v1) & $\lambda F_S(k - 1)$ &               $r\sum_{j = 1}^{k}(1-q)^{j} F_S(k-j)$               &                 $ n \sum_{j = 1}^{k}(-1)^{j+1} \left(\frac{q}{1-q}\right)^{j} F_S(k-j)$                 \\
		$E[X_1 \times 1_{\{S \leq k\}}]$ (v2) &      $\lambda F_S(k - 1)$              & $r\frac{1-q}{q}\sum_{j = 1}^{k}\left(1 - (1-q)^j\right) f_S(k-j)$ & $ -n \sum_{j = 1}^{k}\frac{1 - \left(-\frac{q}{1-q}\right)^{j+1}}{1 + \frac{q}{1-q}} f_S(k-j)$\\
		\hline
	\end{tabular}}
	\caption{Implications of Theorem \ref{thm:ab0} for all distributions.}\label{tab:ab0-implications}
\end{table}

\subsection{Allocations and family of compound Katz distributions}

Let $M$ be a frequency rv with support on $\mathbb{N}$. Let $\{B_1, B_2, \dots\}$ form a sequence of independent, identically distributed and non-negative severity rvs, independent of $M$. Within the context of the current paper, we assume that the severity rvs take values in $\mathbb{N}$. In this section, we consider cases where the rv $X$ is defined as a random sum, that is,
\begin{equation} \label{eq:DefXSommeAleatoire}
    X = \begin{cases}
	0, & M = 0\\
	\sum_{j = 1}^{M} B_j,& M > 0
\end{cases}.    
\end{equation}
It follows from \eqref{eq:DefXSommeAleatoire} that 
the pmf of $X$ is 
$$f_X(k) = 
\begin{cases}
	\Pr(M = 0), & k = 0\\
	\sum_{j = 1}^{\infty} \Pr(M = j) \Pr(B_1 + \dots + B_j = k), & k \in \mathbb{N}
\end{cases}.$$
Evaluation of $\Pr(B_1 + \dots + B_j = k)$ is analytically and computationally expensive since direct computation results from $j-1$ convolutions. 
Fortunately, \cite{panjer1981recursivea} and others have developed efficient recursive relationships to compute the pmf of $X$ when $M$ is a Katz distribution; we often refer to these relations as Panjer recursions. We are now interested in the OGF for unconditional expected allocations for compound Katz distributions such that we may have an efficient algorithm for unconditional expected allocations.

\begin{theorem}
\label{thm:compount-ab0}
	Let $X_1$ be a rv having a compound Katz distribution with frequency rv $M_1$ having cdf in the Katz family of distributions
	with parameter $|a| < 1$ and discrete severity rv $B_1$, with $X_1$ independent of $S_{-1}$. The OGF of unconditional expected allocations is
	\begin{equation}\label{eq:pgf_compound_ab0}
		\mathcal{P}^{[1]}_S(t) = t\mathcal{P}'_{B_1}(t) \mathcal{P}'_{M_1}\left(\mathcal{P}_{B_1}(t)\right) 
		\mathcal{P}_{S_{-1}}(t).
	\end{equation}
	Further, if $|a \mathcal{P}_{B_1}(t)| < 1$ for all $|t| < 1$, then 
	\begin{equation}\label{eq:pgf_compound_ab0-v2}
		\mathcal{P}^{[1]}_S(t) = t\mathcal{P}'_{B_1}(t) \frac{a + b}{1 - a \mathcal{P}_{B_1}(t)}\mathcal{P}_S(t).
	\end{equation}
\end{theorem}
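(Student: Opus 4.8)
The plan is to combine the independence corollary with the composition structure of compound distributions and then invoke the Katz differential equation. First I would recall that, since $X_1$ is independent of $S_{-1}$, Corollary \ref{cor:indep-s-x} gives $\mathcal{P}^{[1]}_S(t) = t\,\mathcal{P}'_{X_1}(t)\,\mathcal{P}_{S_{-1}}(t)$, so it suffices to find a workable expression for $\mathcal{P}'_{X_1}$. Because $X_1$ is the random sum in \eqref{eq:DefXSommeAleatoire} with severities $B_1, B_2, \dots$ independent and identically distributed and independent of the frequency $M_1$, the standard compounding identity yields $\mathcal{P}_{X_1}(t) = \mathcal{P}_{M_1}\!\left(\mathcal{P}_{B_1}(t)\right)$ for $|t| \le 1$; here $\mathcal{P}_{B_1}(t) \in [0,1]$ for $t \in [0,1]$ lies in the domain of $\mathcal{P}_{M_1}$, so the composition is well defined and both series converge.

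Next I would differentiate this composition. Since a pgf is a power series with non-negative coefficients summing to one, it is analytic on the open unit disk and may be differentiated term by term there; the chain rule then gives $\mathcal{P}'_{X_1}(t) = \mathcal{P}'_{M_1}\!\left(\mathcal{P}_{B_1}(t)\right)\mathcal{P}'_{B_1}(t)$. Substituting this into the expression from Corollary \ref{cor:indep-s-x} produces exactly \eqref{eq:pgf_compound_ab0}.

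For the second identity, I would apply Lemma \ref{lemma:pmf-deriv} to the frequency rv $M_1$, which follows a Katz distribution, but evaluated at the point $s = \mathcal{P}_{B_1}(t)$ rather than at $t$: this gives $\mathcal{P}'_{M_1}(s) = \frac{a+b}{1-as}\,\mathcal{P}_{M_1}(s)$. The hypothesis $|a\,\mathcal{P}_{B_1}(t)| < 1$ for all $|t| < 1$ is precisely what guarantees that $s = \mathcal{P}_{B_1}(t)$ stays inside the region where $1 - as \neq 0$, so the substitution is legitimate. Plugging $s = \mathcal{P}_{B_1}(t)$ back in and using $\mathcal{P}_{M_1}\!\left(\mathcal{P}_{B_1}(t)\right) = \mathcal{P}_{X_1}(t)$ together with the independence identity $\mathcal{P}_{X_1}(t)\mathcal{P}_{S_{-1}}(t) = \mathcal{P}_S(t)$ converts \eqref{eq:pgf_compound_ab0} into \eqref{eq:pgf_compound_ab0-v2}.

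The main obstacle, such as it is, is bookkeeping about domains of validity: one must check that $\mathcal{P}_{B_1}(t)$ lands where $\mathcal{P}_{M_1}$ and $\mathcal{P}'_{M_1}$ are analytic and that the pole of $1/(1-as)$ is avoided, which is exactly the role of the hypothesis $|a\,\mathcal{P}_{B_1}(t)| < 1$; everything else is the chain rule, the compounding identity, and Lemma \ref{lemma:pmf-deriv}.
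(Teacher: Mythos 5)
Your proof is correct and follows essentially the same route as the paper: the compounding identity $\mathcal{P}_{X_1}(t) = \mathcal{P}_{M_1}(\mathcal{P}_{B_1}(t))$, the chain rule, Corollary \ref{cor:indep-s-x}, and Lemma \ref{lemma:pmf-deriv} evaluated at $\mathcal{P}_{B_1}(t)$. Your added attention to the domains where the composition and the factor $1/(1-a\mathcal{P}_{B_1}(t))$ are well defined is a welcome refinement of the paper's terser argument, not a departure from it.
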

\begin{proof}
	The pgf of the compound rv $X_1$ is $\mathcal{P}_{X_1}(t) = \mathcal{P}_{M_1}\left(\mathcal{P}_{B_1}(t)\right)$, then \eqref{eq:pgf_compound_ab0} follows directly from \eqref{eq:agf-indep}. The relation in \eqref{eq:pgf_compound_ab0-v2} follows from the chain rule and Lemma \ref{lemma:pmf-deriv}.
\end{proof}

\begin{example}[Independent compound Poisson distributions]\label{ex:compound-poisson}
	Let $X_1$ be a rv whose distribution belongs to the class of compound Poisson distributions, whose severity distribution is discrete with support $\mathbb{N}$. We have
	\begin{equation}\label{eq:agf-compound-poisson}
		\mathcal{P}^{[1]}_{S}(t) = \lambda_1 t \mathcal{P}_{B_1}'(t) \mathcal{P}_{M_1}(\mathcal{P}_{B_1}(t))\mathcal{P}_{S_{-1}}(t) = \lambda_1  t \mathcal{P}_{B_1}'(t) \mathcal{P}_{S}(t).
	\end{equation}
	It follows that
	\begin{equation*}
		[t^k]\mathcal{P}^{[1]}_S(t) = E[X_1 \times 1_{\{S = k\}}] = \lambda_1 \sum_{l = 1}^{k} l f_{B_1}(l) f_{S}(k - l), \quad k \in \mathbb{N}_1
	\end{equation*}
	and
	\begin{align*}
	    [t^k]\left\{\frac{\mathcal{P}^{[1]}_S(t)}{1-t}\right\} = E[X_1 \times 1_{\{S \leq k\}}] &= \lambda_1 \sum_{l = 1}^{k} E\left[B_1 \times 1_{\{B_1 \leq l\}}\right] f_{S}(k - l), \quad k \in \mathbb{N}_1\\
	    &= \lambda_1 \sum_{l = 1}^{k} l f_{B_1}(l) F_{S}(k - l), \quad k \in \mathbb{N}_1.
	\end{align*}
\end{example}
\begin{remark}
	The results of Theorem \ref{thm:compount-ab0} are analogous to Section 4 of \cite{denuit2020largeloss} when the severity follows a discrete distribution. 
	One can recover continuous versions of the results from \cite{denuit2020largeloss} using the continuous version of the OGF for cumulative unconditional expected allocations; see Section \ref{sec:discussion} for details.
\end{remark}

\subsection{Algorithm for a sum of independent compound Poisson distributed rvs}

Consider a portfolio of $n$ independent participants, where $X_i$ is a compound Poisson distributed rv with frequency parameter $\lambda_i$ and discrete severity rv $B_i$ for $i = 1, \dots, n$. We have $\mathcal{P}_{S}(t) = \prod_{i = 1}^{n} \mathcal{P}_{M_i}\left(\mathcal{P}_{B_i}(t)\right).$ For Poisson distributions, the OGF of unconditional expected allocations for the $i$th risk, $i \in \{1, \dots, n\}$, is $\lambda_i t\mathcal{P}'_{B_i}(t)\mathcal{P}_{S}(t).$

We use the FFT algorithm to compute the unconditional expected allocations. The most computationally intensive step is using the FFT algorithm to compute the values of $f_{S}$. Fortunately, since the term $\mathcal{P}_{S}(t)$ is present for the OGF of the probability masses of $S$, of the unconditional expected allocations and of the cumulative unconditional expected allocations, one must only compute the DFT of $\mathcal{P}_{S}(t)$ once. In Algorithm \ref{algo:expected-allocations}, we present a method to compute the unconditional expected allocations efficiently using the FFT algorithm. One can change line \ref{algoline:pmf-derivee} by the cumulative sum of the vector to compute cumulative unconditional expected allocations. 

\begin{algorithm}
	\KwIn{Parameters $\lambda_i, \boldsymbol{f}_{B_i}$ for $i = 1, \dots, n$.}
	\KwOut{Unconditional expected allocations $E[X_i \vert S = k]$ for $k = 0, \dots, k_{max}-1$ and $i = 1, \dots, n$.}
	\nl \For{$i = 1, \dots, n$}{
		\nl Compute $\widehat{\boldsymbol{f}}_{X_i} = \mathcal{P}_{X_i}(\widehat{\boldsymbol{\e}}_1)$ or with \eqref{eq:fft}\;
	}
	\nl Compute the DFT of $S$ as the element-wise product $\widehat{\boldsymbol{f}}_{S} = \prod_{i = 1}^{n}\widehat{\boldsymbol{f}}_{X_i}$\;
	\nl Compute $\boldsymbol{f}_{S}$ by taking the inverse DFT of $\widehat{\boldsymbol{f}}_{S}$\;
	\nl \For{$i = 1, \dots, n$}{
		\nl Compute the DFT $\widehat{\boldsymbol{\phi}}_{B_i}$ of the vector $\{(k+1)f_{B_i}(k+1)\}_{0 \leq k \leq k_{max}-1}$\label{algoline:pmf-derivee}\;
		\nl Compute element-wise $\widehat{\boldsymbol{\mu}}_{i} = \lambda_i \widehat{\boldsymbol{\e}}_1 \times \widehat{\boldsymbol{\phi}}_{B_i}  \times \widehat{\boldsymbol{f}}_{S}$\;
		\nl Compute $\boldsymbol{\mu}_i$ as the inverse DFT of $\widehat{\boldsymbol{\mu}}_{i}$\;
		\nl Compute $\{E[X_i \vert S = k]\}_{0\leq k \leq k_{max}-1}$ by the element-wise division $\boldsymbol{\mu}_i / \boldsymbol{f}_{S}$\;
	}
	\nl Return $\{E[X_i \vert S = k]\}_{0\leq k \leq k_{max}-1}$ for $i = 1, \dots, n$.
	\caption{Conditional means for compound Poisson distributions.} \label{algo:expected-allocations}
\end{algorithm}

\section{Applications of the FFT algorithm}\label{sec:fft}

In this section, we present a few applications that use the FFT algorithm to compute unconditional expected allocations and observe their implications for risk-sharing. We start with a small portfolio of risks, where the FFT algorithm is not essential but will explain the method and point out numerical considerations. Then, we consider a larger portfolio to show that the method scales well to problems with many agents. 
We examine the numerical comparison of direct, recursive, and transform-based approaches and apply arithmetization techniques to a problem involving heavy-tailed risks. To the best of our knowledge, our method is the first to efficiently handle the large and heavy-tailed portfolios examined in this section unless each risk is identically distributed, testifying to the utility of our approach in practical situations. 

\subsection{Small portfolio of independent compound Poisson distributed rvs}\label{ss:app-small}

We replicate Case 1 of the application in Section 6.1 of \cite{denuit2019sizebiased}. Consider four participants in a pool, and each participant contributes risk $X_i$ that follows a compound Poisson distribution, with parameter $\lambda_i$ and a discrete severity whose pmf is $f_{C_i}$ with support $\{1, 2, 3, 4\}$, 
for $i \in \{1,\ldots,4\}$. 
We present the values of $\lambda_i$ and $f_{C_i}$ for each participant $i \in \{1,\ldots,4\}$ in Table \ref{tab:param-app-1}. 
\begin{table}
	\centering
	\begin{tabular}{cccccc}
		$i$ & $\lambda_i$ & $f_{C_i}(1)$ & $f_{C_i}(2)$ & $f_{C_i}(3)$ & $f_{C_i}(4)$ \\ \hline
		1  &    0.08     &     0.1      &     0.2      &     0.4      &     0.3      \\
		2  &    0.08     &     0.15     &     0.25     &     0.3      &     0.3      \\
		3  &     0.1     &     0.1      &     0.2      &     0.3      &     0.4      \\
		4  &     0.1     &     0.15     &     0.25     &     0.3      &     0.3
	\end{tabular}
	\caption{Values of $\lambda_i$ and $f_{C_i}$ for each participant $i \in \{1,\ldots,4\}$ for a small pool of four participants.}\label{tab:param-app-1}
\end{table}
We provide the \textsf{R} code in Appendix \ref{app:small}, the numerical values that follow come from \textsf{R} version 4.0.4. Besides the setup and validation code, the actual computation of conditional means takes fewer than 15 lines (even if the number of participants grows). We recover the values in \cite{denuit2019sizebiased}. 

In Figure \ref{fig:plots-app-1}, we present three graphs: the pmf of $S$, the total unconditional expected allocations for a given outcome of $S$, and the total conditional means for a given outcome of $S$. One should have $\sum_{i = 1}^{n}E[X_i \times 1_{\{S = k\}}] = k \Pr(S = k)$, which is what we observe in the middle plane of Figure \ref{fig:plots-app-1}.

To compute the unconditional expected allocations, we must divide the total unconditional expected allocations by the pmf of $S$. If the pmf of $S$ is very small for some values of $k$, the unconditional expected allocations may be inaccurate. This occurs since machines have finite precision, and exact zeroes are not always represented accurately due to underflow issues. The FFT algorithm introduces small numerical errors during the computation of the pmf of $S$ and the unconditional expected allocations. Still, these errors are negligible since they occur for events whose probability is close to the machine precision. Therefore, one should only consider the unconditional expected allocations for values of $k$ where $\Pr(S = k)$ is not too small, which are the values of $k$ where the unconditional expected allocations are accurate. 

While the transform-based approach proposed in this paper provides accurate unconditional expected allocations for the important values of $k$, it is useful to identify the values of $k$ where the unconditional expected allocations are inaccurate. One way to validate the accuracy of unconditional expected allocations is to validate that the full allocation property in \eqref{eq:full-allocation} holds. In this case, we expect $\sum_{i = 1}^{n} E[X_i \vert S = k] = k$ for all $k \in \mathbb{N}$. We plot the curve $\sum_{i = 1}^{n} E[X_i \vert S = k]$ in the right plane of Figure \ref{fig:plots-app-1}. That curve is linear between $k = 0$ and $k = 37$. However, one has $\sum_{i = 1}^{n} E[X_i \mid S = 38] = 38.05$, which is slightly higher than $38$. The FFT method of computing unconditional expected allocations provides inaccurate values when the mass function is under machine precision; for example, we have $\sum_{i = 1}^{n} E[X_i \mid S = 43] = 116$ and $\sum_{i = 1}^{n} E[X_i \mid S = 63] = -146$. However, we have $\Pr(S = 43) = 1.7\times 10^{-17}$ and $\Pr(S = 63) = 3.3\times 10^{-19}$, that is, they are numerically indecipherable from zero because of underflow. We will investigate the impact of underflow on the unconditional expected allocations in Section \ref{ss:application3a}, where we show that the FFT algorithm is still more useful than the direct and recursive methods for large portfolios of risks.

\begin{figure}
	\centering
	\includegraphics{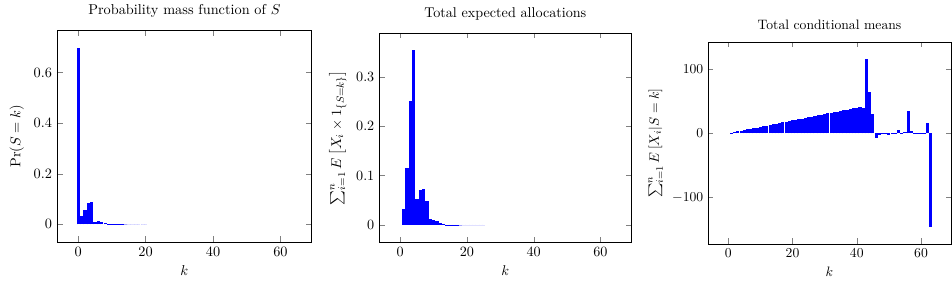}
	\caption{Left: pmf of $S$. Middle: $\sum_{i = 1}^{n}E[X_i \times 1_{\{S = k\}}]$. Right: $\sum_{i = 1}^{n}E[X_i \vert S = k]$.  }\label{fig:plots-app-1}
\end{figure}

\subsection{Large portfolio of independent compound Poisson distributed rvs}

We consider a portfolio or pool of 10{,}000 risks in the second application. Each risk $X_i$ is independent and follows a compound Poisson distribution with parameter $\lambda_i$, with severity rv $B_i \sim NB(r_i, q_i)$, implying that $E[X_i] = \lambda_i r_i (1-q_i)/q_i$, for $i = 1, \dots, 10{,}000$. We set each risk to have different triplets of parameters. For illustration purposes we simulate the triplets of parameters $(\lambda_i, r_i, q_i)$ for $i = 1, \dots, 10{,}000$ according to $\lambda_i \sim Exp(10)$, $r_i \sim Unif(\{1, 2, 3, 4, 5, 6\})$ and $q_i \sim Unif([0.4, 0.5])$ such that on average, $\lambda_i = 0.1$, $r_i = 3.5$ and $q_i = 0.45$. We present the simulated parameters and expected values for the first eight contracts in Table \ref{tab:first-param}.
\begin{table}[ht]
	\centering
	\begin{tabular}{rrrrrrrrr}
		$i$ &        1 &        2 &        3 &        4 &        5 &        6 &        7 &        8 \\ \hline
		$\lambda_i$ & 0.161152 & 0.031859 & 0.027368 & 0.238748 & 0.115137 & 0.470203 & 0.146247 & 0.011747 \\
		$q_i$ & 0.489756 & 0.423367 & 0.455898 & 0.451500 & 0.486834 & 0.440405 & 0.440082 & 0.481335 \\
		$r_i$ &        2 &        6 &        1 &        4 &        6 &        5 &        3 &        1 \\\hline
		$E[X_i]$ & 0.335788 & 0.260354 & 0.032662 & 1.160162 & 0.728190 & 2.987289 & 0.558214 & 0.012658
	\end{tabular}
	\caption{First eight sets of parameters.}
	\label{tab:first-param}
\end{table}

We present the code in Appendix \ref{app:large}, using \textsf{R} version 4.0.4. 
That code computes conditional means for each of the 10{,}000 unique risks and takes approximately 16 seconds on a personal computer (with a Intel\circledR Core\texttrademark i5-7600K CPU @ 3.80GHz CPU). 

In Figure \ref{fig:distn-cond-mean} we present the pmf of the conditional means $E[X_i \vert S]$, for $i \in \{1, \dots, 8\}$. Note that all eight pmfs share the same values on the $y$-axis but differ on the $x$-axis. This is because the relationship giving the probability masses for conditional means is 
$$\Pr(E[X_i | S] = E[X_i | S = k]) = \sum_{\{j \in \mathbb{N} : E[X_i | S = j] = E[X_i | S = k]\}}\Pr(S = j),$$
for all $i \in \{1, \dots, 10{,}000\}$ and $k \in \mathbb{N}$. In the case where each risk is heterogeneous, we have that $\Pr(E[X_i | S] = E[X_i | S = k]) \approx \Pr(S = k)$; the only difference is the domains of $E[X_i | S]$ for $i \in \{1, \dots, 10{,}000\}$. Indeed, although the pmf for the conditional means of risk $j = 2$ (in red) appears to be a single point mass, we observe by magnifying that the pmf shares the probability values from the other pmfs. Also, as shown by the authors of \cite{denuit2021risk} under mild technical conditions, the conditional means converge to the expected value. For illustration purposes, we add vertical dashed lines at the expected values. 

\begin{figure}[ht]
	\centering
	\includegraphics{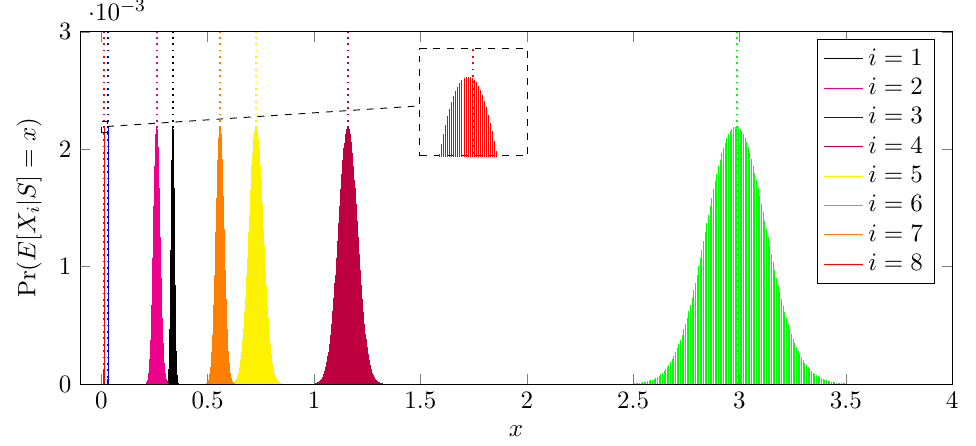}
	\caption{Probability mass function of the conditional means $E[X_i \vert S]$, for the contracts $i \in \{1,\dots,8\}$.
		The vertical lines go through the values $E[X_i]$, for $i \in \{1, \dots,8\}$. }\label{fig:distn-cond-mean}
\end{figure}

\subsection{Numerical comparison of approaches}\label{ss:application3a}

In this section, we compare the time required and the accuracy of the numerical approximation from the FFT algorithm. We will compare the direct approach, the recursive approach, and the FFT algorithm. To do this, we consider a portfolio of two risks $X_1$ and $X_2$ 
with compound Poisson distributions and Pareto severities, where the Poisson rates are respectively $\lambda_1 = 0.05$ and $\lambda_2 = 0.02$, the Pareto shapes are respectively $\alpha_1 = 4$ and $\alpha_2 = 5$ and the Pareto scales are respectively $\beta_1 = 2000$ and $\beta_2 = 3000$. The expected values are $E[X_1] = 33.3333$ and $E[X_2] = 15$. Since our approach works with discrete random variables, we consider discretized Pareto severities with different discretization steps using the moment matching method; see, for instance, Appendix E.2 of \cite{klugman2018loss} for discretization methods. Selecting a small discretization step will yield more accurate approximations but require more computation time. 

We discretize up to the truncation point of $k_{max} = 2^{18} = 262144$, such that, numerically, $F_{B_1}(k_{max}) = F_{B_2}(k_{max}) = 1$, so there is no aliasing error (up to machine precision). In Table \ref{tab:compute}, we present the computation time for the direct, recursive, and FFT approaches for different discretization steps $h$. We observe that the FFT approach is significantly faster than the direct and recursive approaches, and the time increase is more significant for the direct and recursive approaches. The reason is that the direct approach has a complexity of $O(1/h^3)$, the recursive approach has a complexity of $O(1/h^2)$, and the FFT approach has a complexity of $O(-\log(h)/h)$. In this case, the FFT approach enables one to compute the conditional means at the level of a penny in less than a minute, which is not possible with the direct and recursive approaches in a reasonable time. 

\begin{table}[ht]
	\centering
	\caption{Computation time for direct, recursive and FFT approaches.}\label{tab:compute}
	\begin{tabular}{lrrrrrr}
		$h$       &   1000 &     100 &       10 &        1 &   0.1 &   0.01 \\ \hline
		Direct    & 0.1446 & 15.4318 & 1987.458 &       -- &    -- &     -- \\
		Recursive & 0.0253 &   0.970 &   94.300 & 9766.796 &    -- &     -- \\
		Transform & 0.0003 &   0.003 &    0.025 &    0.238 & 5.835 & 59.955
	\end{tabular}
\end{table}

We will now study the accuracy of the numerical approximation for the direct, recursive, and FFT approaches. To allow for exact computation of the conditional means and to compare the approaches with the exact values, we consider a discretization step of $h = 100$. Denote by $g^{D}_{i:k}$, $g^{R}_{i:k}$, and $g^{T}_{i:k}$ the conditional means of $X_i$ given $S = k$ computed using the direct, recursive, and FFT approaches, respectively. The conditional mean risk-sharing rules satisfy the full allocation property \eqref{eq:full-allocation}. We compute the stopping time based on the transform-based conditional means
$$\gamma^*(k_{max}) = \inf \{k \in h\mathbb{N}: |g^{T}_{1:k} + g^{T}_{2:k} - k| \leq \varepsilon\},$$
where the conditional means are computed with marginals truncated at $k_{max}$. We assume that the conditional means are accurate up to the tolerance $\varepsilon$ for $k \leq \gamma^*$, and calculate the errors based on this assumption. The direct approach is exact up to the truncation point, that is, for $k_{max} > \gamma^*(k_{max})$, or if the truncation point is such that the probabilities sum to one (at least up to machine precision). The recursive and FFT approaches are approximations due to wrap-around, aliasing, and FFT underflow errors. 

To compute the errors of different computation methods, we consider two error measures: the sum of absolute errors and the supremum of absolute errors. We compute the errors for the direct, recursive and transform approaches with different truncation points. We will compute the exact conditional means using the direct approach with a truncation point of $2^{18}$, and we will denote the exact conditional means by $g^{*}_{i:k}$. The two error measures we consider are the sum of absolute errors and the supremum of absolute errors, defined as
$$E_{\text{sum}, j}^{m} := \sum_{k \in \{0, h, 2h, \dots, \gamma^*\}} 
\left| g^{*}_{j:h} - g^{m}_{j:k} \right|
$$
and
$$E_{\text{sup}, j}^{m} := \sup_{k \in \{0, h, 2h, \dots, \gamma^*\}} 
\left| g^{*}_{j:h} - g^{m}_{j:k} \right|,
$$
for $j\in\{1, 2\}$ and $m \in \{D, R, T\}$. We present, in Table \ref{tab:errors}, the errors for the direct, recursive, and transform approaches with different truncation points. When the truncation point is too small (e.g., $2^{14}$), the discretization probabilities for the severity distributions do not sum to one, which leads to errors in all approaches (mostly due to errors from computing the probability mass of $X_1$ and $X_2$). Note that, since the balance property is satisfied (up to $\varepsilon$), we will have $E_{\text{sum}, 1}^{m} \approx E_{\text{sum}, 2}^{m}$ and $E_{\text{sup}, 1}^{m} \approx E_{\text{sup}, 2}^{m}$, and we only report the errors for $X_1$ in Table \ref{tab:errors}. 
\begin{table}[ht]
	\centering
	\caption{Errors for the recursive and FFT approaches with different truncation points.}\label{tab:errors}
	\begin{tabular}{rrrrr}
		$k_{max}$ &  $2^{14} = 16384$ &  $2^{15}= 32768$ &  $2^{16}=65536$ & $2^{17}=131072$  \\ [0.3cm] \hline 
$F_{B_1}(k_{max})$ & 0.9998599 & 0.9999891 & 0.9999992 & 0.9999999 \\
$F_{B_2}(k_{max})$ & 0.9999112 & 0.9999958 & 0.9999998 & 1.0000000 \\
$E_{\text{sum}, 1}^{D}$ & 91420.947 &  783.3588 &  0.000000 & 0.000000  \\
$E_{\text{sup}, 1}^{D}$ &  2465.381 &  783.3588 &  0.000000 & 0.000000  \\[0.3cm]
$E_{\text{sum}, 1}^{R}$ & 35620.298 &  239.3113 &  0.541361 & 1.245401  \\
$E_{\text{sup}, 1}^{R}$ &  1265.880 &  239.2936 &  0.017958 & 0.027168  \\[0.3cm]
$E_{\text{sum}, 1}^{T}$ & 35620.312 &  239.3079 &  0.536580 & 1.243309  \\
$E_{\text{sup}, 1}^{T}$ &  1265.880 &  239.2937 &  0.018011 & 0.027106  \\[0.3cm]
$\gamma^*(k_{max})$ &     20800 &     32800 &     56800 & 64000     \\
$F_{S}(\gamma^*(k_{max}))$ & 0.9999963 & 0.9999994 & 0.9999999 & 1
\end{tabular}
\end{table}

When the truncation point $k_{max}$ is too small, $F_{B_1}(k_{max})$ and $F_{B_2}(k_{max})$ are smaller than one and we have $\gamma^*(k_{max}) > k_{max}$, leading to errors in all approaches (the largest errors coming from the direct approach). In this application, we have for $k_{max} = 2^{15}$ that $\gamma^*(k_{max})$ larger than $k_{max}$ for a single step, leading to large errors. For $k_{max} = 2^{16}$ and $k_{max} = 2^{17}$, we have $\gamma^*(k_{max}) < k_{max}$, so there are no errors for the direct approach. The wrap-around errors from the recursive and FFT approaches are similar. While the errors are larger for $k_{max} = 2^{17}$ than for $k_{max} = 2^{16}$, more values are computed, and the extra values are computed for a higher $k$ where the errors are larger. Still, the errors are small, and the worst-case error corresponds to three pennies for the recursive and FFT approaches, which is acceptable for most applications. Since the FFT approach is significantly faster than the recursive approach, we recommend using the FFT approach for computing the conditional means.

\subsection{Portfolio of heavy tailed risks}\label{ss:application3}

Next, we consider the computation of unconditional expected allocations for a portfolio of heavy-tailed risks. In particular, we consider risks whose variance does not exist; hence, the central limit theorem results of \cite{denuit2021risk} do not hold because the variance of the sum of each rv does not exist. We consider a portfolio of size $n \in \{3, 100, 1000\}$ and compare the behaviour of the first three contracts. Our goal is to illustrate empirically that the conditional mean for each contract converges to their marginal mean. We set $X_i, i \in \{1, \dots, n\}$, to follow an arithmetized Pareto distribution defined using the moment matching method. Further, we select parameters $\alpha_i \in [1.3, 1.9]$, for $i \in \{1, \dots, n\}$, so the variance of individual risks does not exist. For the first three risks, we select $(\alpha_1, \alpha_2, \alpha_3) = (1.3, 1.6, 1.9)$ and $(\lambda_1, \lambda_2, \lambda_3) = (10(1.3 - 1), 10(1.6 - 1), 10(1.9 - 1))$ such that $E[X_i] \approx 10$ for $i \in \{1, 2, 3\}$. We write the approximate symbol since the mean may not be preserved exactly due to truncation since Pareto rvs are heavy-tailed. For the remaining risks $X_i, i \in \{4, \dots, 1000\}$, we simulate the parameters according to $\alpha_i \sim Unif([1.3, 1.9])$ and $\lambda_i \sim Unif([5, 15])$, implying $50/9 \leq E[X_i] \leq 50$ for $i \in \{4, \dots, 1000\}$, and the variance does not exist for any risk in the portfolio. We provide the 
\textsf{R}
code in Appendix \ref{app:heavy}.

In Figure \ref{eq:cdf-pareto}, we present the cdf of the conditional means for risks $X_1, X_2$ and $X_3$. The dashed, dotted, and dash-dotted lines present the cdf of conditional means for $n = 3, 100$ and 1000 respectively. Due to the heavy-tailed risks, one must select a large truncation point $k_{max}$ to avoid aliasing (see, for instance, \cite{grubel1999computation} and \cite{embrechts2009panjer} for discussions on aliasing with FFT methods for aggregation). Hence, we compute $1000 \times 2^{20}$ values, which takes approximately 9 minutes on a personal laptop. To facilitate comparisons, we present the cdf of $X_i$ ($n=1$) in black and the expected value of $X_i$ in green (vertical line), $i \in \{1, 2, 3\}$. Each cdf crosses once. Therefore, according to the Karlin-Novikoff criteria, given that they share the same mean, the conditional means are ordered under the convex order, as expected; see, for instance, \cite{denuit2012convex}. One may observe that the cdfs of the conditional means approach the cdf of a degenerate rv at the mean. The conditional mean of $X_3$ approaches the degenerate rv at its mean faster since its tail is lighter than $X_1$ or $X_2$. Indeed, one observes that the cdf of $E[X_3 \vert S = x]$ is almost vertical, while the cdf of $E[X_1 \vert S = x]$ is not.
\begin{figure}[ht]
	\centering
	\includegraphics{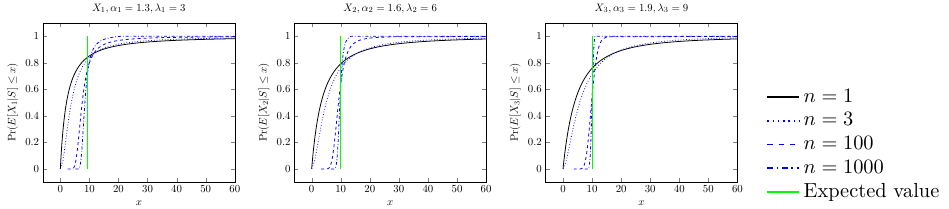}
	\caption{Cumulative distribution function of conditional means for $n = 1, 3, 100, 1000$.	
	}\label{eq:cdf-pareto}
\end{figure}

According to this application, one observes that the conditional mean $E[X_1 \vert S]$ converges in distribution to the expected value $E[X_1]$ as the portfolio size increases. However, future research remains to show that this conjecture is true in general, that is, providing a law of large numbers result for the conditional mean, generalizing the results of \cite{denuit2020largeloss} and \cite{denuit2021risk}.

\subsection{Small portfolio of heterogeneous losses}\label{sec:bern}

Let $\boldsymbol{I} = (I_1, \dots, I_n)$ be a vector of independent Bernoulli rvs with marginal probabilities $q_i \in (0, 1)$, for $i \in \{1, \dots, n\}$. Further define the rv $X_i = b_i \times I_i$, with $b_i \in \mathbb{N}_1$, for $i \in \{1, \dots, n\}$. This model is sometimes called the individual risk model (with a fixed payment amount) and has applications, for instance, in life insurance, where death benefits are usually known in advance, or for insurance-linked securities in situations where investors recover their initial investment unless a trigger event occurs before the maturity date. The interested reader may refer to \cite{klugman2018loss} for detailed examples of the individual risk model. The multivariate pgf of $\boldsymbol{X} = (X_1, \dots, X_n)$ is 
\begin{equation}\label{eq:pmf-S-bern}
    \mathcal{P}_{\boldsymbol{X}}(t_1, \dots, t_n) =\prod_{i = 1}^{n} (1 - q_i  + q_i t_i^{b_i}),
\end{equation}
while the OGF of the sequence of unconditional expected allocations for risk $X_1$ is
$$\mathcal{P}_{S}^{[1]}(t) = q_1 b_1t^{b_1}\prod_{i = 2}^{n} (1 - q_i + q_i t^{b_i}).$$
To compute exact values of the pmf and unconditional expected allocations using the FFT approach, one must select $k_{max} \geq 1 + \sum_{i = 1}^{n} b_i$ (or select the smallest $m$ such that $2^m \geq 1 + \sum_{i = 1}^{n} b_i$).

Let us discuss some of the theoretical difficulties with computing the conditional means in the context of this application. To do so, we will need some notation. The cardinality of a set $\mathcal{A}$ is denoted by $|\mathcal{A}|$. Define the set $\mathcal{B} = \{(x_1, \dots, x_n):x_i \in \{0, b_i\}, 1 \leq i \leq n\}$ as all distinct possible outcomes of $\boldsymbol{X}$. Note that $|\mathcal{B}| = 2^n$. Define $\mathcal{B}_k = \{(x_1, \dots, x_n) \in \mathcal{B}:
\sum_{i=1}^n x_i = k\}$, for $k = 0, 1, \ldots, s_{max}$, where $s_{max} = \sum_{i=1}^n b_i$.
Note that $|\mathcal{B}_{0}| = 1$ and $|\mathcal{B}_{s_{max}}| = 1$.

The sets $\mathcal{B}_{k}$ and $\mathcal{B}_{k'}$ are mutually exclusive, i.e. $\mathcal{B}_k \cap \mathcal{B}_{k'} = \varnothing$, for $k \neq k' \in \{1,\ldots,n\}$. Also, $\bigcup_{k=0}^{s_{max}} \mathcal{B}_k = \mathcal{B}$. When $\mathcal{B}_k$ is empty ($\mathcal{B}_k = \varnothing$), we have $|\mathcal{B}_k| = 0$, meaning the event $\{S=k\}$ is impossible. 
Such situations may occur when the number of contracts is small and the coverage amounts are heterogeneous. We say that $k$ is a possible outcome of the total losses $S$ if $|\mathcal{B}_{k}| > 0$. 

Fix $k \in \{0,1,\dots,s_{max}\}$ such that $|\mathcal{B}_{k}| = 1$, and let $(x_1,\dots,x_n)$ be the element of that singleton. This implies that the conditional expectation is given by $E[X_i | S = k] = x_i$, for $x_i \in \{0,b_i\}$, which means that the support of $E[X_i | S]$ is $0$ or its full coverage $b_i$, for $i \in \{1,\dots,n\}$. In other words, the support of $E[X_i | S ]$ is the same as the support of $X_i$; and a participant in a pool has not benefited from a diversification of its risk, no matter the size $n$ of the portfolio. As $|\mathcal{B}_{k}|$ increases, the support of $E[X_i | S ]$ has more elements, and these are the situations where insurance provides more value to customers. Counting the number of partitions of a set is a difficult problem in number theory. Fortunately, the OGF method provides a numerical solution to compute the unconditional expected allocations without further notions of number theory. See also Example 4.1 of \cite{denuit2021mortality} for a situation where some participants do not diversify due to partitions of odd numbers.

We consider a portfolio of $n = 6$ risks. 
We present the parameters for this example in Table \ref{tab:ex-bern-indep}, and the code to replicate this study is in Appendix \ref{app:arch}.
\begin{table}[ht]
	\centering
	\begin{tabular}{ccccccc}
		$i$   &  1   &  2  &  3   &  4  &  5   &  6  \\\hline 
		$b_i$ &  1   &  3  &  10  &  4  &  5   & 10  \\
		$q_i$ & 0.8  & 0.2 & 0.3 & 0.05 & 0.15 & 0.25
	\end{tabular}
	\caption{Marginal parameters for a small portfolio of heterogeneous losses.}\label{tab:ex-bern-indep}
\end{table}

In Figure \ref{fig:bern-cond-means-indep}, we present the conditional means along with the pmf and cdf of conditional means for risks $X_1$, $X_2$ and $X_3$. We describe each panel in the following:

\begin{itemize}
	\item The left panel presents the values of $E[X_i \vert S]$, for $i \in \{1, 2, 3\}$. Note that for the claim severity values in Table \ref{tab:ex-bern-indep}, we have $|\mathcal{B}_{2}| = |\mathcal{B}_{31}| = 0$, hence the events $S = 2$ and $S = 31$ are impossible and we have $\Pr(S = 2) = \Pr(S = 31) = 0$. When computed using the pgf in \eqref{eq:pmf-S-bern-archimedian} and the FFT algorithm, we have $\Pr(S = 2) = \Pr(S = 31) \approx 10^{-16}$ since this is the underflow error using double precision with IEEE 754. Hence, the conditional means should be 0 for $k = 2$ and $k = 31$; dividing two underflowed values generates erratic results. These values should be rejected from the analysis, but we show them in red as a warning of numerical problems with the FFT method if one is not wary of underflow versus true zeroes when using the FFT method. As in other applications, one should observe the total conditional means (row 4 of Figure \ref{fig:bern-cond-means-indep}) and retain the values that form a step function with steps of 1. Conditional means that deviate from their expected total should be discarded due to underflow or division by zero. However, the events which cause numerical issues have zero or negligible probability (under $10^{-16}$); hence, the expectations of interest do not suffer from underflow. 
	
	Also of interest is the shape of conditional means as a function of $k$. For $i = 1$, we have unpredictable unconditional expected allocations since the outcome 1 is often a part of $\mathcal{B}_{k}, k \in \mathbb{N}_1$, and $q_1$ is greater than $q_i, i \in \{2, \dots, 6\}$. For $i = 2$, we have predictable unconditional expected allocations since $3$ is a part of $\mathcal{B}_k$ for cyclical values of $k$, and $3$ does not divide the other values of $b_i, i \in \{1, 3, 4, 5, 6\}$. Finally, we have $b_3 = 10$ and $b_6 = 10$. Hence, the conditional allocations are often shared between risks $X_3$ and $X_6$, though not perfectly since $q_3 \neq q_6$. 
    
    In row 3 of Figure \ref{fig:bern-cond-means-indep}, we have a mass around 5.6 since the outcomes $(X_1 = 1, X_3 = 0, X_4 = 4, X_5 = 5, X_6 = 0)$ and $(X_1 = 0, X_3 = 0, X_4 = 0, X_5 = 0, X_6 = 10)$ also yields $S = 10$, so these events diversify the event $(X_1 = 0, X_3 = 10, X_4 = 0, X_5 = 0, X_6 = 0)$. 
	\item The middle panel presents the pmf of $E[X_i | S]$, for $i \in \{1, 2, 3\}$. The support of this rv is the set of values $\{E[X_i \vert S = k], k \in \mathbb{N}\}$, for $i \in \{1, 2, 3\}$. Notice that the support of this rv is sparse for small portfolios with heterogeneous values of $b_i, i \in \{1, \dots, n\}$.
	\item The right panel presents the cdf of $E[X_i \vert S]$, for $i \in \{1, 2, 3\}$, simplifying the interpretation of the middle panel since the probability masses may appear close together. 
\end{itemize}

\begin{figure}[H]
	\centering
	\includegraphics[width = 0.7\textwidth]{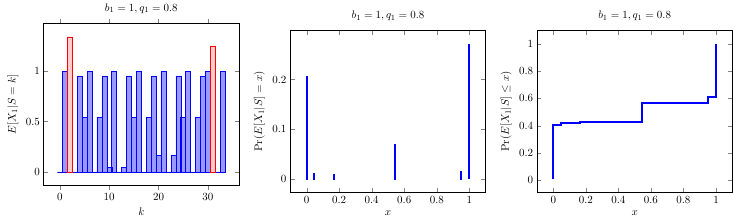}
	\includegraphics[width = 0.7\textwidth]{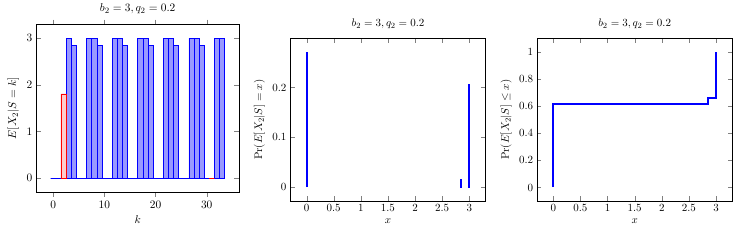}
	\includegraphics[width = 0.7\textwidth]{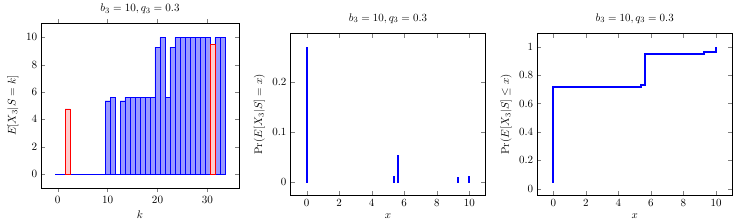}
	\includegraphics[width = 0.7\textwidth]{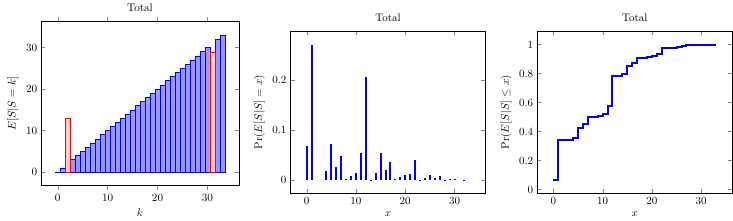}
	\caption{Left: conditional means. Middle: pmf of conditional means. Right: cdf of conditional means.}
	\label{fig:bern-cond-means-indep}
\end{figure}

Note that as more participants enter the pool, more risks may diversify; that is, $\mathcal{B}_{k}$ has a higher cardinality for all $k \in \mathbb{N}_1$. The risks diversify, and the pmf of unconditional expected allocations is less sparse. In Figure \ref{fig:bern-cond-means-50-indep}, we replicate the above study but add 69 participants where we sample the parameters according to $q_i \sim Unif([0, 1])$ and $b_i \sim Unif(\{1, 2, \dots, 10\})$. We present the results for risk 3 (with $b_3 = 10)$ and the total pool in Figure \ref{fig:bern-cond-means-50-indep}. Once again, we observe numerical issues for large values of $k$ in the left panel. However, the middle panel is much less sparse than in Figure \ref{fig:bern-cond-means-indep}. 
\begin{figure}[ht]
	\centering
	\includegraphics[width = 0.9\textwidth]{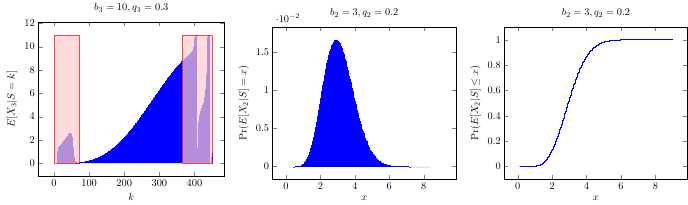}
	\includegraphics[width = 0.9\textwidth]{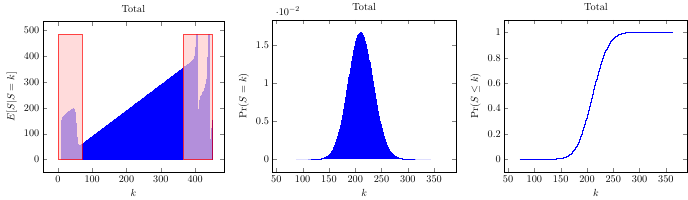}
	\caption{Pool of 75 participants. Left: conditional means. Middle: pmf of conditional means. Right: cdf of conditional means.}\label{fig:bern-cond-means-50-indep}
\end{figure}

\section{Sum of dependent rvs}\label{sec:dependent}

One may also use the methods described in this paper to compute unconditional expected allocations for dependent rvs. One obtains convenient results when the multivariate pgf is simple to differentiate, which is sometimes the case for mixture models (which include common shock models). The results from this section supplement the literature on risk allocation or risk sharing for mixture models as studied in Section 3 of \cite{cossette2018dependent} or Section 4 of \cite{denuit2021conditionala}.

\subsection{Multivariate Poisson distribution constructed with common shocks}

As a first example, we present a common shock model. Multivariate Poisson distributions based on common shocks are studied notably in \cite{teicher1954multivariate} and \cite{mahamunulu1967note}. The interested reader may also consult \cite{lindskog2003common} for actuarial applications of common shock Poisson models.

\begin{example}[Hierarchical common Poisson shocks]
	Let $Y_{A} \sim Pois(\lambda_A)$ for $A \in \{ \{1, 2\}^3 \cup  \{1, 2\}^2 \cup \{0, 1, 2\}\}$ be independent rvs. We construct dependent rvs through the common shock framework $X_{ijk} = Y_{ijk} + Y_{ij} + Y_{i} + Y_0$ for $(i, j, k) \in \{1, 2\}^3$. This is a special case of the multivariate Poisson distribution from \cite{mahamunulu1967note}, and we illustrate the dependence structure in Figure \ref{fig:pois-common-shock}.
	\begin{figure}[ht]
		\centering
		\includegraphics{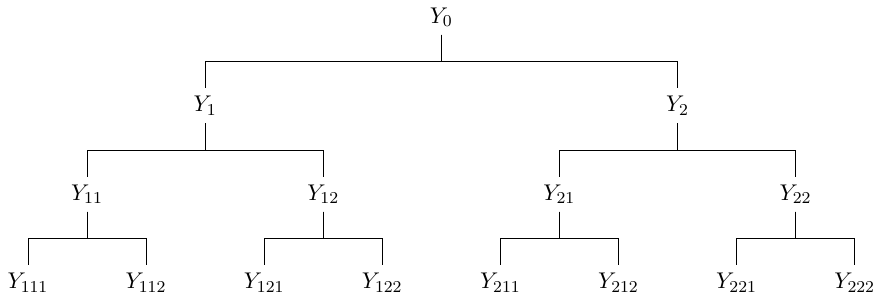}
	\caption{Hierarchical Poisson common shock structure.}\label{fig:pois-common-shock}
	\end{figure}
	Let $S = \sum_{(i, j, k) \in \{1, 2\}^3} X_{ijk}.$
	Then, one may verify that $S$ follows a compound Poisson distribution, so one may use Panjer recursion or FFT to compute the values of the pmf of $S$. Further, the OGF for the unconditional expected allocations of risk $X_{ijk}$, for $(i,j,k) \in \{1, 2\}^3$, is 
	$$\mathcal{P}_{S}^{[ijk]}(t) = \left(\lambda_{ijk}t + \lambda_{ij} t^2 + \lambda_{i} t^4 + \lambda_0 t^{8}\right)\mathcal{P}_S(t).$$	
	For $(i,j,k) \in \{1, 2\}^3$, we deduce that
	\begin{align*}
		E\left[X_{ijk} \times 1_{\{S = k\}}\right] &= \begin{cases}
			0, & k = 0\\
			\lambda_{ijk} f_S(k-1), & k = 1\\
			\lambda_{ijk} f_S(k-1) + \lambda_{ij} f_S(k-2), & k = 2, 3\\
			\lambda_{ijk} f_S(k-1) + \lambda_{ij} f_S(k-2) + \lambda_{i}f_S(k-4), & k = 4, \dots, 7\\
			\lambda_{ijk} f_S(k-1) + \lambda_{ij} f_S(k-2) + \lambda_{i}f_S(k-4) + \lambda_0 f_{S}(k-8), & k = 8, 9, \dots
		\end{cases}.
	\end{align*}
More general Poisson common shock models, as proposed in \cite{mahamunulu1967note}, yield similar expressions for expected and cumulative unconditional expected allocations. 
\end{example}

\subsection{Multivariate mixed Poisson distribution}

Next, we consider a multivariate mixed Poisson distribution. We induce dependence using a mixture random vector $\boldsymbol{\Theta} = (\Theta_1, \dots, \Theta_n)$ with $E[\Theta_i] = 1$ for $i = 1, \dots, n$. Consider a vector of conditionally independent rvs $(X_i \vert \Theta_i = \theta_i) \sim Poisson(\lambda_i \theta_i)$ for $i = 1, \dots, n$. 
The multivariate pgf of $(X_1, \dots, X_n)$ is
\begin{equation}\label{eq:pmf-mx-pois}
	\mathcal{P}_{\boldsymbol{X}}(t_1, \dots, t_n) = E_{\boldsymbol{\Theta}} \left[\e^{\Theta_1 \lambda_1 (t_1 - 1)} \dots \e^{\Theta_n \lambda_n (t_n - 1)}\right] = \mathcal{M}_{\boldsymbol{\Theta}}(\lambda_1(t_1 - 1), \dots, \lambda_n(t_n - 1)), 
\end{equation}
where $\mathcal{M}_{\boldsymbol{\Theta}}$ is the multivariate moment generating function (mgf) of $\boldsymbol{\Theta}$. 
Then, combining Theorem \ref{th:TheBigResult} and \eqref{eq:pmf-mx-pois}, 
we find that 
\begin{equation}\label{eq:agf-mx-pois}
	\mathcal{P}^{[1]}_{S}(t) = \left.\lambda_1 t \left[\frac{\partial }{\partial x}\mathcal{M}_{\boldsymbol{\Theta}}(x, \lambda_2(t - 1), \dots, \lambda_n(t - 1))\right]\right\vert_{x = \lambda_1(t-1)}.
\end{equation}

\begin{example}[Poisson-gamma common mixture]
	We consider a mixture distribution from a bivariate gamma common shock model described in \cite{mathai1991multivariate}. Let us define three independent rvs $Y_i, i \in \{0, 1, 2\}$ where $Y_0 \sim Gamma(\gamma_0, \beta_0)$, and $Y_i \sim Gamma(r_i - \gamma_0, r_i)$ for $i \in \{1, 2\}$ with $0 \leq \gamma_0 \leq \min(r_1, r_2)$. Let $\Theta_i = \beta_0 / r_i Y_0 + Y_i$ for $i = 1, 2$. Then the pair of rvs $(\Theta_1, \Theta_2)$ follows a bivariate gamma distribution with marginals $\Theta_i \sim Ga(r_i, r_i)$, $i = 1, 2$ and $\gamma_0$ is a dependence parameter. 
	The bivariate mgf of the pair of rvs $(\Theta_1,\Theta_2)$ is 
	\begin{equation}\label{eq:mgf-gamma-crmm}
		\mathcal{M}_{\Theta_1, \Theta_2}(x_1, x_2) = \left(1 -\frac{x_1}{r_1}\right) ^{-(r_1 -\gamma_0)} \left(1-\frac{x_2}{r_2}\right)^{-(r_2 - \gamma_0)} \left(1 - \frac{x_1}{r_1} - \frac{x_2}{r_2}\right)^{-\gamma_0}
	\end{equation}
	and its derivative with respect to $x_1$ is
	\begin{equation}\label{eq:mgf-gamma-crmm-derivee}
		\frac{\partial }{\partial x_1} \mathcal{M}_{\Theta_1, \Theta_2}(x_1, x_2) = \left( \frac{r_1 - \gamma_0}{r_1}\frac{1}{1-x_1/r_1} + \frac{\gamma_0}{r_1}
		\frac{1}{1 - x_1/r_1 - x_2 / r_2} \right)\mathcal{M}_{\Theta_1, \Theta_2}(x_1, x_2).
	\end{equation} 
	Consequently, the mixed Poisson distributed random vector $(X_1, X_2)$ follows a bivariate negative binomial distribution. It follows from \eqref{eq:pmf-mx-pois} and \eqref{eq:mgf-gamma-crmm} that
	$$\mathcal{P}_{S}(t) = \left(1 - \zeta_1(t - 1)\right)^{-(r_1 - \gamma_0)}\left(1-\zeta_2(t - 1)\right)^{-(r_2 - \gamma_0)} \left(1 - \zeta_{12} (t - 1)\right)^{-\gamma_0},$$
	where $\zeta_1 = \lambda_1/r_1$, $\zeta_2 = \lambda_2/r_2$ and $\zeta_{12} = \lambda_1/r_1 + \lambda_2/r_2$. We recognize that $S$ is the sum of three independent negative binomial rvs with parameters $(r_1 - \gamma_0, 1/(1-\zeta_1))$, $(r_2 - \gamma_0, 1/(1-\zeta_2))$ and $(\gamma_0, 1/(1-\zeta_{12}))$. The expression of the pmf $f_S$ of $S$ is given in Theorem 1 of \cite{furman2007convolution}. From \eqref{eq:agf-mx-pois} and \eqref{eq:mgf-gamma-crmm-derivee}, we get the following expression for the OGF for unconditional expected allocations:
	$$\mathcal{P}^{[1]}_{S}(t) = \lambda_1 t \left( \frac{1 - \gamma_0/r_1}{1 - \zeta_1(t-1)} + \frac{\gamma_0/r_1}{1 - \zeta_{12}(t-1)} \right)	\mathcal{P}_{S}(t).$$
	Finally, we can recover the unconditional expected allocations using the FFT algorithm or with the recursive-type formula 
	$$[t^k]\mathcal{P}^{[1]}_S(t) = E\left[X_1 \times 1_{\{S = k\}}\right] = \lambda_1 \sum_{j = 0}^{k-1}\left[\left(1-\frac{\gamma_0}{r_1}\right)\frac{1}{1 + \zeta_1}\left(\frac{\zeta_1}{1 + \zeta_1}\right)^{j} + \frac{\gamma_0}{r_1}\frac{1}{1 + \zeta_{12}}\left(\frac{\zeta_{12}}{1 + \zeta_{12}}\right)^{j}\right]f_{S}(k-1-j).$$
	One may develop similar expressions for cumulative unconditional expected allocations, applying the cumulative operator to the geometric series or to the pmf of $S$. 
\end{example}

\subsection{Multivariate Bernoulli distributions defined with Archimedean copulas}

Finally, we consider a multivariate Bernoulli distribution whose dependence structure is defined with an Archimedean copula. 
Let $(I_1, \dots, I_n)$ form a random vector, where the marginal distributions are Bernoulli with success probability $q_i \in (0, 1)$, for $i \in \{1, \dots, n\}$. Following \cite{marshall1988families}, we define the random vector according to $\Pr(I_i = 1 \vert \Theta = \theta) = r_i ^{\theta}$, where $\Theta$ is a mixing rv with a distribution defined on a strictly positive support. The relationship between the parameters $r_i$ and $q_i$ is
$$\Pr(I_i = 1) = E_{\Theta}\left[r_i ^{\Theta}\right] = \mathcal{L}_{\Theta}(-\ln r_i),$$
from which it follows that $r_i = \exp\{-\mathcal{L}_{\Theta}^{-1}(q_i)\}$, where
$\mathcal{L}_{\Theta}(t)$ and $\mathcal{L}_{\Theta}^{-1}(t)$ are respectively the Laplace-Stieltjes transform and the inverse Laplace-Stieltjes transform of the mixing rv. Further define the rv $X_i = b_i \times I_i$, with $b_i \in \mathbb{N}_1$ for $i \in \{1, \dots, n\}$. Note that the rvs $(X_i | \Theta = \theta)$ are conditionally independent, for $i \in \{1, \dots, n\}$ and $\theta > 0$. It follows that the multivariate pgf of $\boldsymbol{X} = (X_1, \dots, X_n)$ is 
$$\mathcal{P}_{\boldsymbol{X}}(t_1, \dots, t_n) = E\left[\prod_{i = 1}^{n} (1 - r_i ^{\Theta} + r_i ^{\Theta} t_i^{b_i})\right] = \int_{0}^{\infty} \prod_{i = 1}^{n} (1 - r_i ^{\theta} + r_i ^{\theta} t_i^{b_i}) \, \mathrm{d} F_{\Theta}(\theta).$$

We note that the underlying dependence structure in this model is an Archimedean copula; see, for instance, \cite{marshall1988families}, Section 4.7.5.2 of \cite{denuit2006actuarial} or Section 7.4 of \cite{mcneil2015quantitative} for the frailty construction of Archimedean copulas using common mixtures.

We consider the case where $\Theta$ is a discrete rv with support $\mathbb{N}_1$. Following the computational strategy from \cite{cossette2018dependent}, we select a threshold value $\theta^* = F_{\Theta}^{-1}(1 - \varepsilon)$ for a small $\varepsilon > 0$ and we have
\begin{equation}\label{eq:pmf-S-bern-archimedian}
	\mathcal{P}_{S}(t) = \sum_{\theta = 1}^{\theta^*}\Pr(\Theta = \theta)\prod_{i = 1}^{n} (1 - r_i ^{\theta} + r_i ^{\theta} t^{b_i}).
\end{equation}
Note that when the components of the random vector are independent, the rv $S$ follows a generalized Poisson-binomial distribution \cite{zhang2018generalized}. In the case of \eqref{eq:pmf-S-bern-archimedian}, we notice the pgf of a mixture of generalized Poisson-binomial distributions, where the mixture rv comes from the frailty construction of Archimedean copulas.

The OGF of the sequence of unconditional expected allocations for risk $X_1$ is
$$\mathcal{P}_{S}^{[1]}(t) = \sum_{\theta = 1}^{\theta^*}\Pr(\Theta = \theta)r_1 ^{\theta} b_1t^{b_1}\prod_{i = 2}^{n} (1 - r_i ^{\theta} + r_i ^{\theta} t^{b_i}).$$

\begin{example}\label{exa:bern}
We consider a portfolio of $n = 6$ risks, with $\Theta$ following a shifted geometric rv with pmf $f_{\Theta}(k) = (1-\alpha)\alpha^{k-1}$, for $k \in \mathbb{N}_1$, with $\alpha = 0.5$. It follows that the underlying dependence structure is an Ali-Mikhail-Haq copula. Following \cite{cossette2018dependent}, we select a threshold $\varepsilon = 10^{-10}$, such that $\theta^*= 34$. The indemnity payments are the same as in Table \ref{tab:ex-bern-indep}. We present the validation curve, the pmf for the conditional means of risk $X_3$, and the pmf of $S$ in Figure \ref{fig:bern-cond-means-arch} for $\alpha \in \{0, 0.1, 0.5, 0.8, 0.95\}.$ Increasing the dependence parameter increases the probability of zero contributions and full ($X_3 = b_3$) contributions. For other allocation values, the support of $E[X_3 | S]$ tends to cluster around the same value of 6 since increasing the dependence also increases the probability of mutual occurrence. Indeed, the probabilities for the outcomes $X_1 = 1, X_4 = 4$ and $X_5 = 5$ become more likely (resp. 0.006, 0.007, 0.011, 0.02 and 0.032 for $\alpha =$ 0, 0.1, 0.5, 0.8 and 0.95), so more diversification occurs when the total costs are divisible by 10, as $\alpha$ increases. 
\begin{figure}[ht]
	\centering
	\includegraphics{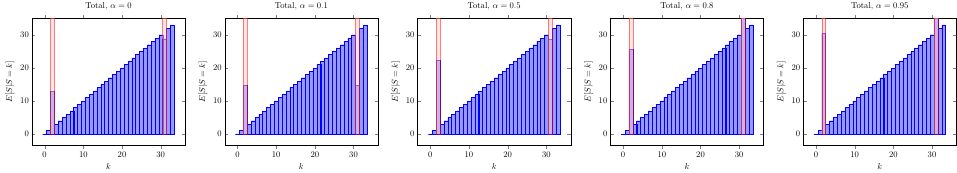}
	\includegraphics{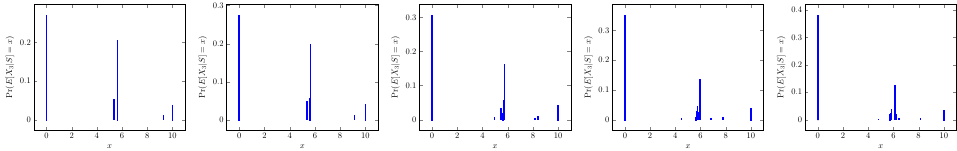}
	\includegraphics{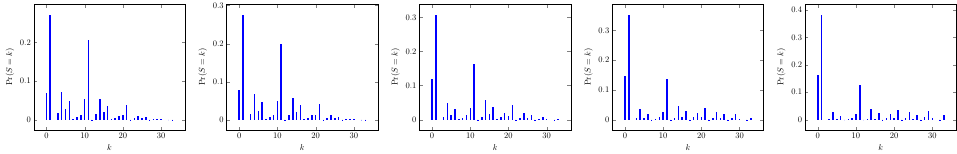}
    
	\caption{Pool of six participants.}\label{fig:bern-cond-means-arch}
\end{figure}

\begin{figure}[ht]
	\centering
	\includegraphics{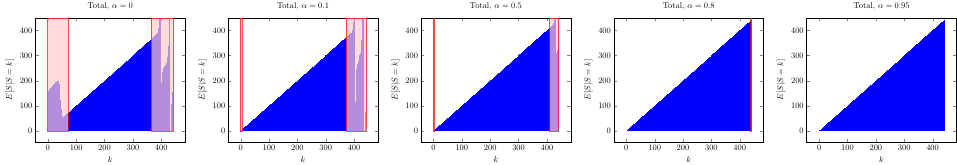}
	\includegraphics{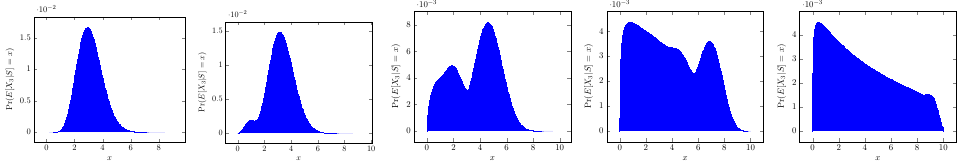}
	\includegraphics{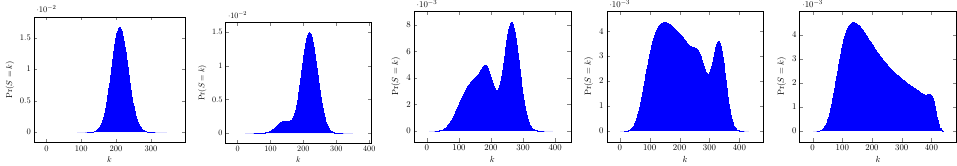}
	\caption{Pool of 75 participants.}\label{fig:bern-cond-means-50-arch}
\end{figure}

Next, we add 69 participants to the pool to investigate the effect of reducing the sparsity of the possible unconditional expected allocations. We present the validation curve, the values of $\Pr(E[X_3|S] = k)$ for $k \in \{0, \dots, 441\}$ and $\alpha = \{0, 0.1, 0.5, 0.8, 0.95\}$ in Figure \ref{fig:bern-cond-means-50-arch}. Note that the pmf of $S$ does not always converge to a normal distribution; hence, central limit theorems do not apply. Indeed, the common mixture representation of the Ali-Mikhail-Haq copula generates multiple nodes for the pmf of $S$ in this example. However, the OGF method with the FFT algorithm lets us easily extract the exact values of the pmf of unconditional expected allocations. As we increase the dependence parameter, the probability masses of $S$ and $E[X_3|S]$ are less concentrated around their means; thus, the tail of the distributions have non-zero mass, so there are no numerical issues in the validation curve. The code for this example is provided in Appendix \ref{app:arch}.
\end{example}

\section{Discussion}\label{sec:discussion}

We proposed a generating function method to compute the unconditional expected allocation, which has valuable applications in peer-to-peer insurance and risk allocation problems. The method simplifies solutions to risk allocation problems and enables FFT-based algorithms for fast computations.

The link between derivatives of pgfs and conditional distributions is not new. See, for instance, the use of derivatives to study conditional distributions with Poisson rvs \cite{subrahmaniam1966test, kocherlakota1988compounded} or with phase-type distributions \cite{ren2017cmph}. In a bivariate setting, \cite{kocherlakota1992bivariate} show that the conditional pgf of $X_1$ given the sum $S = X_1 + X_2 = s$ is
$$\mathcal{P}_{X_1 | S}(t_1 | k) = \frac{\left.\frac{\partial^k}{\partial t_2^k} \mathcal{P}_{X_1, X_2}(t_1t_2, t_2)\right\vert_{t_1 = t, t_2 = 0}}{\left.\frac{\partial^k}{\partial t_2^k} \mathcal{P}_{X_1, X_2}(t_1t_2, t_2)\right\vert_{t_1 = 1, t_2 = 0}},$$
for $|t_1| \leq 1$ and $k \in \mathbb{N}$. However, computing conditional expected values would involve computing multiple partial derivatives of the bivariate pgf. The method proposed in this paper only requires one partial derivative, a more convenient and tractable task. 

We remark that the generating function method provides a simpler proof of the size-biased transform method of computing unconditional expected allocations for discrete rvs. With $\widetilde{X}$ representing the size-biased transform of the rv $X$, along with the definition of the size-biased transform in \eqref{eq:pmf-size-biased}, the relationship between the pgfs of $X$ and $\widetilde{X}$ is
	$$\mathcal{P}_{\widetilde{X}}(t) = E\left[t^{\widetilde{X}}\right] = \sum_{k = 0}^{\infty} t^{k}f_{\widetilde{X}}(k) = \frac{t}{E[X]}\sum_{k = 0}^{\infty} k t^{k-1}f_{X}(k) = \frac{t}{E[X]}\frac{\diff}{\diff t} \sum_{k = 0}^{\infty} t^{k}f_{X}(k) = \frac{t}{E[X]} P_{X}'(t),$$
	for $|t| \leq 1$. Alternatively, one can obtain the pgf of $\widetilde{X}$ by applying operation \ref{prop:rightshift} (right shift) and \ref{prop:index-multiply} (index multiply) of OGFs from Lemma \ref{th:OperationsOGF}. See Section 2.2.1 of \cite{arratia2019size} for discussions on the characteristic function and pgfs of size-biased rvs. From \eqref{eq:conditional-mean-generating-function}, we have
	\begin{align*}
		\mathcal{P}^{[1]}_{S}(t) = E\left[X_1 t^{S}\right] = \sum_{k = 0}^{\infty} t^{k}\sum_{k_1 = 0}^{k} k_1 f_{X_1, S_{-1}}(k_1, k-k_1) = \sum_{k = 0}^{\infty} t^{k}\sum_{k_1 = 0}^{k} \frac{f_{\widetilde{X}_1, S_{-1}}(k_1, k-k_1)}{E[X_1]},
	\end{align*}
	then $E[X_1]\mathcal{P}^{[1]}_{S}(t)$ is the pgf of $\widetilde{X}_1 + S_{-1}$, so \eqref{eq:allocation-size-biased} follows immediately. 

Future research could involve developing methods to quantify or correct aliasing errors for heavy-tailed distributions. In Section \ref{ss:application3}, we use a very large truncation point ($k_{max} = 2^{20}$). As computer processors continue to perform faster computations, it is convenient to increase the truncation point; however, it may also be convenient to provide methods that reduce this error source for efficiency's sake. The authors of \cite{grubel1999computation} quantify the aliasing error related to using the FFT algorithm to compute the pmf of compound distributions and propose a tilting procedure to reduce this error. Developing a similar theory for the OGFs of unconditional expected allocations and cumulative unconditional expected allocations will increase these methods' efficiency. 

Another research topic involves the allocation of tail variance. In \cite{furman2006tail}, the authors introduce the tail variance, defined by 
$$TV_{\kappa}(X) = Var(X \vert X > F_{X}^{-1}(\kappa)),$$
with $\kappa \in (0, 1)$, and propose allocations via the tail covariance allocation rule, 
$$TCov_{\kappa}(X_1 \vert S) = Cov(X_1, S \vert S > F_{S}^{-1}(\kappa)) = \sum_{j = 1}^{n}Cov(X_1, X_j \vert S > F_{S}^{-1}(\kappa)).$$
One can obtain efficient algorithms to compute the desired expectations once again. We have
$$E\left[X_1 X_j t^{S}\right] = \left. \left\{t_1 t_j \frac{\partial^2}{\partial t_1 \partial t_j} \mathcal{P}_{X_1, \dots, X_n}(t_1, \dots, t_n) \right\}\right\vert_{t_1 = \dots =t_n = t}$$
for $j \in \{1, \dots, n\}\setminus \{1\}$. The OGF for unconditional expected allocations for the second factorial moment is 
$$E\left[X_1(X_1 - 1) t^S\right] = \left.\left\{t_1^2 \frac{\partial^2}{\partial t_1^2} \mathcal{P}_{X_1, S_{-1}}(t_1, t_2)\right\}\right\vert_{t_1 = t_2 = t},$$
one can generalize the latter formula to $k$th factorial moments by taking subsequent derivatives. It follows that 
$E\left[X_1 X_j \vert S > k\right]$ and $E\left[X_1^2 \vert S > k\right]$ can be computed with 
$$E\left[X_1 X_j \times 1_{\{S \leq k\}}\right] = [t^k] \left\{\frac{E\left[X_1 X_j t^{S}\right]}{1-t}\right\}$$
and
$$E\left[X_1^2 \times 1_{\{S \leq k\}}\right] = [t^k] \left\{\frac{E\left[X_1(X_1 - 1) t^S\right] + \mathcal{P}_{S}^{[1]}(t)}{1-t}\right\}.$$

Finally, one can consider the implications of this method in the continuous case. Letting $\mathcal{L}_{X_1, \dots, X_n}$ denote the multivariate Laplace-Stieltjes transform of the vector $(X_1, \dots, X_n)$, one can show that $$\left.-\frac{\partial}{\partial t_1} \mathcal{L}_{X_1, \dots, X_n}(t_1, \dots, t_n)\right\vert_{t_1 = \dots = t_n = t},$$ for $t \geq 0$, is the Laplace transform of $E\left[X_1 \times 1_{\{S = s\}}\right]$. One could use this formulation to obtain new closed-form expressions for unconditional expected allocations, compute unconditional expected allocations through numerical inversion of Laplace transforms, or develop asymptotic properties of unconditional expected allocations. We note that the Laplace transform of size-biased rvs in the context of continuous rvs is explored in, for instance, \cite{furman2020lognormal}.

\section{Acknowledgements}

This work was partially supported by the Natural Sciences and Engineering Research Council of Canada (Cossette: 04273; Marceau: 05605). We thank Christian Y. Robert for fruitful discussions. We also thank the Editor and the reviewers who have provided helpful comments to improve the quality of this paper.

\appendix

\section{R code for the numerical applications}\label{app:code}

\subsection{Small portfolio of independent compound Poisson rvs}\label{app:small}
\lstinputlisting[lastline=32]{code/example-denuit-2019.R}

\subsection{Large portfolio of independent compound Poisson rvs}\label{app:large}
\lstinputlisting[lastline=39]{code/example-large-dimension.R}

\subsection{Portfolio of heavy-tailed risks}\label{app:heavy}
\lstinputlisting[lastline=29]{code/exemple-paretov2.R}

\subsection{Archimedean copula example}\label{app:arch}
\begin{lstlisting}
set.seed(20220314)
n <- 6
# bi <- sample(1:10, n, replace = TRUE)
bi <- c(1, 3, 10, 4, 5, 10)
qi <- c(0.1, 0.15, 0.2, 0.25, 0.3)

kmax <- sum(bi) + 1
fft1 <- exp(-2i * pi * (0:(kmax - 1))/kmax)

alph <- 0
eps_theta <- 1e-10
theta_max <- max(2, floor(log(eps_theta)/log(alph)) + 1)
f_theta <- alph**(1:theta_max - 1) * (1 - alph)

LST_inv_geom <- function(u) log((1 - alph)/u + alph)
fft1 <- exp(-2i * pi * (0:(kmax - 1))/kmax)

qi <- runif(n)
ri <- exp(-LST_inv_geom(qi))

fgp_S <- function(s) {
	marginals <- apply(sapply(1:n, function(k) 1 - ri[k]^(1:theta_max) + ri[k]^(1:theta_max) * s^bi[k]), 1, prod)
	sum(f_theta * marginals)
}

fgp_S <- Vectorize(fgp_S)
phis <- fgp_S(fft1)

fs <- (Re(fft(phis, inverse = TRUE))/kmax)

pgf_alloc_i <- function(s, i) {
	marginals <- bi[i] * ri[i]^(1:theta_max) * s^bi[i] * apply(sapply((1:n)[-i], function(k) 1 - ri[k]^(1:theta_max) + ri[k]^(1:theta_max) * s^bi[k]), 1, prod)
	sum(f_theta * marginals)
}

pgf_alloc_i <- Vectorize(pgf_alloc_i)
phi_alloc_1 <- pgf_alloc_i(fft1, 1)
conditional_mean_1 <- (Re(fft(phi_alloc_1, inverse = TRUE))/kmax/fs)
round(conditional_mean_1, 3)
plot(conditional_mean_1, type = 's')
\end{lstlisting}

\end{document}